\newcommand{\proj}{HIT\xspace}
\newcommand*\eg{\textit{e.g.}}
\newcommand*\ie{\textit{i.e.}}
\renewcommand{\emph}[1]{\textit{#1}}
\newcommand{\pan}[1]{\textcolor{blue}{(Pan: #1)}}
\newcommand{\yunyu}[1]{\textcolor{red}{(Yunyu: #1)}}
\theoremstyle{definition}
\newtheorem{theorem}{Theorem}[section]
\newtheorem{definition}[theorem]{Definition}
\theoremstyle{remark}
  \providecommand\BibTeX{{%
    \normalfont B\kern-0.5em{\scshape i\kern-0.25em b}\kern-0.8em\TeX}}}
\begin{document}

\title{Neural Predicting Higher-order Patterns in Temporal Networks}


\author{Yunyu Liu}
\affiliation{%
  \institution{Purdue University}
  \city{West Lafayette, Indiana}
  \country{USA}
  }
\email{liu3154@purdue.edu}

\author{Jianzhu Ma}
\affiliation{%
  \institution{Peking University}
  \city{Beijing}
  \country{China}
  }
\email{majianzhu@pku.edu.cn}

\author{Pan Li}
\affiliation{%
  \institution{Purdue University}
  \city{West Lafayette, Indiana}
  \country{USA}
  }
\email{panli@purdue.edu}
\renewcommand{\shortauthors}{Y. Liu, J. Ma and P. Li}

\begin{abstract}
  Dynamic systems that consist of a set of interacting elements can be abstracted as temporal networks. Recently, higher-order patterns that involve multiple interacting nodes have been found crucial to indicate domain-specific laws of different temporal networks. 
  This posts us the challenge of designing more sophisticated hypergraph models for these higher-order patterns and the associated new learning algorithms. 
  Here, we propose the first model, named \emph{HIT}, for full-spectrum higher-order pattern prediction in temporal hypergraphs. 
  Particularly, we focus on predicting three types of common but important interaction patterns involving three interacting elements in temporal networks, which could be extended to even higher-order patterns. 
  \proj extracts the structural representation of a node triplet of interest on the temporal hypergraph and uses it to tell \emph{what} type of, \emph{when}, and \emph{why} the interaction expansion could happen in this triplet. 
  \proj could achieve significant improvement (averaged 20\% AUC gain to identify the interaction type, uniformly more accurate time estimation) compared to both heuristic and other neural-network-based baselines on 5 real-world large temporal hypergraphs. 
  Moreover, \proj provides a certain degree of interpretability by identifying the most discriminatory structural features on the temporal hypergraphs for predicting different higher-order patterns.
\end{abstract}



\keywords{Network Science, Hypergraph, Graph Representation Learning}


\maketitle


\setlength{\columnsep}{15pt}
\section{introduction}
Graphs provide a fundamental abstraction to study complex systems by viewing elements as nodes and their interactions as edges~\cite{newman2003structure}. 
Temporal graphs track the interactions over time, which allows for a more elaborate investigation of the underlying dynamics of network evolution~\cite{holme2012temporal,kovanen2011temporal}. 
Recently, numerous network models have been developed to learn the temporal network dynamics and further predict the future status of the complex system~\cite{snijders2010introduction,snijders2001statistical}. However, most of these models focus on predicting lower-order patterns in the graphs, \textit{e.g.}, edge prediction between a pair of nodes~\cite{liben2007link,rahman2016link,zhu2016scalable,ma2018graph}.

\begin{wrapfigure}{l}{0.25\textwidth}
    \centering
    \vspace{-5.2mm}
\includegraphics[trim={2.3cm 9.5cm 12.5cm 4.7cm},clip,width=0.25\textwidth]{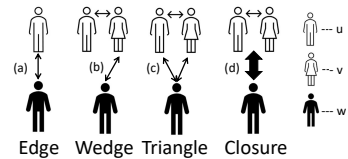}
  \vspace{-6.5mm}
\caption{\small{Incremental processes of the interaction expansion among $\{u,v,w\}$. The four patterns are formally defined in Sec.~\ref{sec:form}.  } }
\vspace{-3mm}
\label{fig:illu} \end{wrapfigure}

Higher-order interaction is a universal but more complicated phenomenon in many real-world systems~\cite{benson2016higher,lambiotte2019networks,li2017inhomogeneous,benson2021higher}. For instance, when we investigate the social circle of a family, one of the observed patterns we could study is that either one or two of the family members know the newly added member (Fig.~\ref{fig:illu}b-d), where the latter case obviously contains even more subtle different network patterns. It could be the case that both of the family members know the new one simultaneously in a common event (Fig.~\ref{fig:illu}d), or separately via different social events (Fig.~\ref{fig:illu}c). 

To fully characterize such higher-order patterns, it is necessary to use hypergraphs to model the complex system, in which hyperedges are elements involved in the same event~\cite{berge1984hypergraphs}. Recently, network scientists have confirmed the value of higher-order patterns in understanding network dynamics of multiple domains such as, social networks, transportation networks, and biological networks~\cite{benson2016higher,lambiotte2019networks,li2017inhomogeneous}, which posts us a great need to develop new algorithms to model the evolution of temporal hypergraphs. 
Successful models may be deployed in many multi-nodes-related applications, such as group recommendation~\cite{amer2009group} or monitoring synergic actions of multiple genes~\cite{perez2009understanding}. 



As shown in Fig.~\ref{fig:illu}, many new patterns might appear when we inspect how the interaction patterns of a group of nodes might change in a time-series manner and across different time scales. Identifying the dynamics that induce these subtle differences needs a more expressive network model. To the best of our knowledge, no previous works have successfully addressed this challenge. Heuristics proposed for edge prediction~\cite{liben2007link} have been generalized and confirmed for hyperedge prediction~\cite{benson2018simplicial,yoon2020much}, 
but those models 
marginally outperform random guessing and are far from ideal to be used in a practical system. 
Neural networks (NN) have more potential to encode complex structural information and recently achieved great success in various graph applications~\cite{hamilton2017representation,meng2018subgraph,jiang2019dynamic,sankar2020dysat,trivedi2019dyrep,kumar2019predicting,xu2020inductive,tgn_icml_grl2020}. However, 
previous works focused on either hyperedge prediction in static networks~\cite{rossi2019higher,yadati2020nhp,cotta2020unsupervised,alsentzer2020subgraph} or simple edge prediction in temporal networks but not for temporal higher-order patterns prediction. 
Moreover, none of the previous works (neither heuristic nor NN-based ones) are able to predict the entire spectrum of higher-order patterns (\textit{e.g.,} Fig.~\ref{fig:illu} b-d) let alone their temporal evolution.


\begin{figure*}[t]
    \centering
    \includegraphics[trim={0.7cm 11.7cm 0.4cm 3.5cm},clip,width=0.90\textwidth]{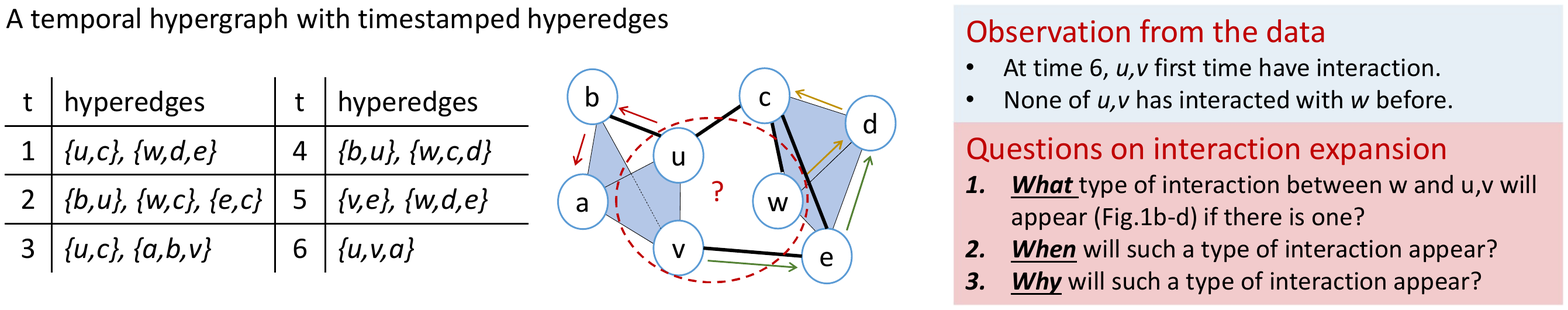}
        \vspace{-10mm}
    \caption{\small{Prediction problems defined on temporal hypergraphs. After the first time that $u, v$ interact, the questions focus on what type of, when and why such an interaction gets extended, \eg, to $w$.}}
    \label{fig:data-illu}
    \vspace{-1mm}
\end{figure*}

\begin{figure*}[t]
    \centering
    \includegraphics[trim={0.1cm 7.7cm 0.4cm 5.2cm},clip,width=0.90\textwidth]{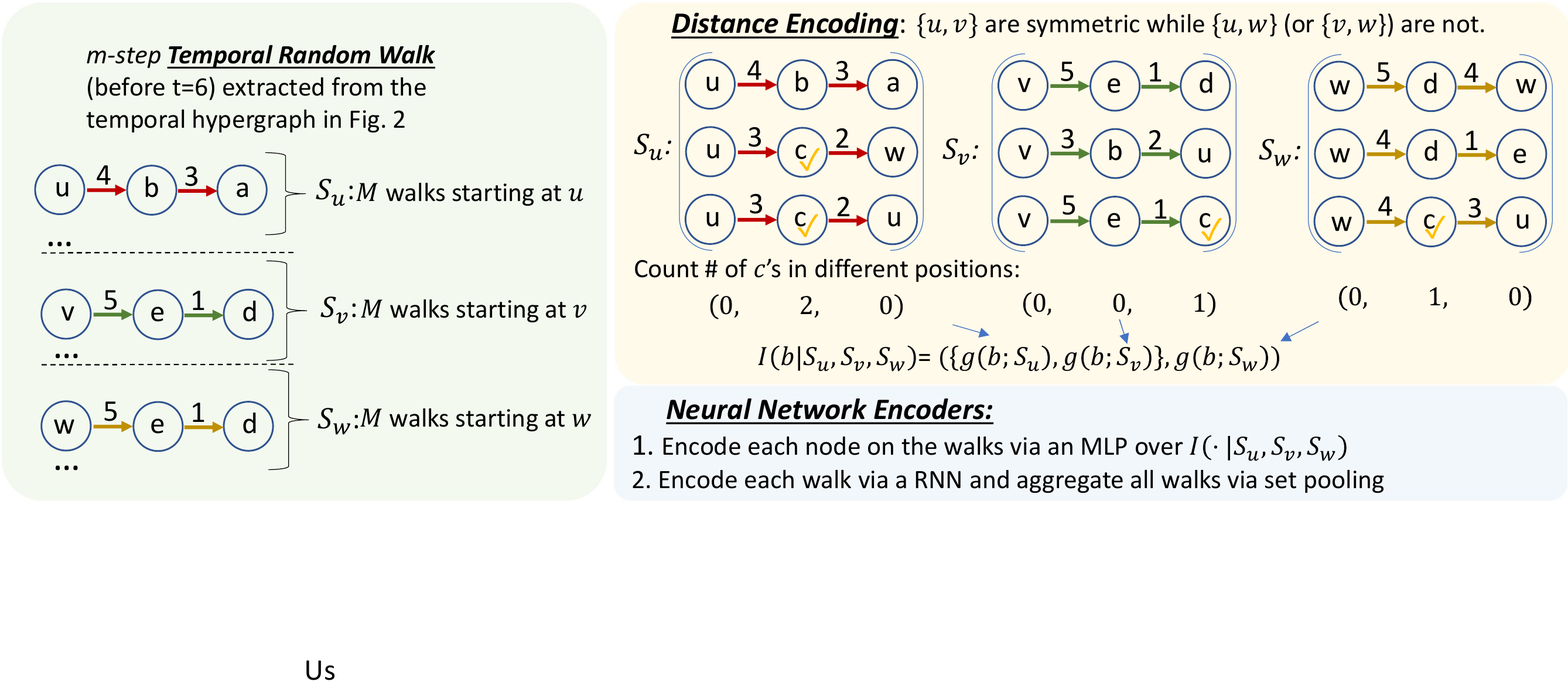}
        \vspace{-2mm}
    \caption{\small{Different components of \proj encoders: temporal random walks, distance encoding and neural network encoders.}}
    \label{fig:model}
    \vspace{-1mm}
\end{figure*}

Here, we propose a novel neural network model, termed \textbf{HI}gher-order \textbf{T}emporal structure predictor (\proj) to predict higher-order patterns in temporal hypergraphs. As proof of the concept, we focus on modeling the interaction patterns among three nodes. We intend to understand and model the \emph{interaction expansion} procedure on how a two-node interaction incorporates the third node via patterns such as a wedge, an (open) triangle or a (simplicial) closure~\cite{benson2018simplicial} (Fig.~\ref{fig:illu} b-d). The idea can be further generalized to learn more complicated patterns involving more nodes. 

The interaction expansion on the temporal hypergraph could be viewed as an incremental process as when the social circle expands, the number of interaction expansion rarely reduces. We target at predicting the \emph{first interaction expansion} among a triplet as it is of the most interest. This is far more challenging than conventional edge/hyperedge predictions in temporal networks, as no previous interactions within the same group of nodes can be leveraged. Fig.~\ref{fig:data-illu} uses a toy example to demonstrate the input data of our model and the three questions that we are about to answer. 

The key intuition of \proj is to encode the temporal network structure around the three nodes of interest using an NN to predict the patterns for their future interactions (Fig.~\ref{fig:model}).
\proj consists of three major components. First, \proj samples a few {\textit{temporal random walks}} (TRW) to efficiently collect the contextual structure. 
In order to model the network dynamics that lead to the first interaction expansion, no historical interactions within the same group of nodes should be referred to, 
and thus higher-order patterns must be captured in an inductive way via learning from other node triplets. Towards this goal, the second step of \proj is to represent the node triplets of interest without tracking the node identities by generalizing the recent {\textit{distance encoding}} (DE) technique~\cite{li2020distance}. 
Our new DE is built upon the counts that one node appears at a certain position according to the set of sampled TRWs while incorporating symmetry and asymmetry to fit the higher-order patterns. Finally, \proj imposes NNs to learn the representation of these DE-based TRWs. 
\proj adopts different decoders to solve different tasks regarding higher-order patterns. We conclude our contributions as follows, 
\begin{enumerate}[leftmargin=*]
    \item \proj is the first model to predict the full spectrum of higher-order patterns in interaction expansion from two nodes to a third node (Fig.~\ref{fig:illu} b-d). \proj achieves a significant boost in predicting accuracy (averaged 20\% gain in AUC) compared to both heuristic and NN-based baselines.    
    \item \proj is the first model designed for temporal hypergraphs. \proj provides elaborate time estimation about when interaction expands in certain patterns (uniformly better than baselines). 
    \item \proj can learn how different types of TRWs contribute to higher-order pattern prediction, which provides certain data insights. To the best of our knowledge, \proj is the first NN-based model that may characterize the discriminatory power of subtle dynamic structural features in temporal networks. We discover some insightful structures that yield a simplicial closure (Fig.\ref{fig:illu} d), a wedge (Fig.\ref{fig:illu} b), and no interaction expansion. 
\end{enumerate}

\vspace{-3mm}
\section{Related works}

Recently, higher-order patterns in networks have attracted much research interest because of their tremendous success in many real-world applications~\cite{milo2002network,alon2007network} including discovering data insights~\cite{benson2016higher,paranjape2017motifs,benson2018simplicial,lambiotte2019networks,do2020structural} and building scalable computing algorithms~\cite{yin2017local,paranjape2017motifs,fu2020local,veldt2020minimizing}.
Some works consider using higher-order motif closure to assist (lower-order) edge prediction~\cite{rossi2019higher, rossi2018hone, abuoda2019link, zhang2018link}. 
The works on predicting higher-order structures can be generally grouped into two categories, predicting multiple edges/subgraphs in graphs~\cite{lahiri2007structure,meng2018subgraph,nassar2020neighborhood,cotta2020unsupervised} and predicting hyperedges in hypergraphs~\cite{zhang2018beyond,benson2018simplicial,zhang2019hyper,yoon2020much,yadati2020nhp,alsentzer2020subgraph}. Subgraphs, \eg, cliques of nodes~\cite{benson2016higher}, could be used to describe higher-order patterns. However, they introduce a certain level of modeling confusion. For instance, current subgraph-based approaches were all found to fail to distinguish between triangles (Fig.~\ref{fig:illu} c) and closures (Fig.~\ref{fig:illu} d)~\cite{benson2018simplicial}. Hypergraph modeling could avoid such ambiguity but previous works mainly focus on predicting whether a hyperedge exists among a node set, which ignores other interaction patterns. Different from these two categories of works, our method aims to predict the entire spectrum of higher-order patterns. 


On the other hand, learning dynamics from temporal networks is also challenging.   
It is typically hard to incorporate simple heuristics such as commute time and PageRanks to encode graph structures to elaborate the complex patterns embedded in the temporal network~\cite{liben2007link,benson2018simplicial,rossi2019higher,yoon2020much,nassar2020neighborhood} due to their limited model expressivity. Therefore, more powerful NN-based models have been introduced to this domain but mostly are designed for normal temporal graphs instead of temporal hypergraphs. Many NN-based methods need to aggregate edges into network snapshots, which loses important temporal information
~\cite{meng2018subgraph,jiang2019dynamic,pareja2020evolvegcn,manessi2020dynamic,goyal2020dyngraph2vec,sankar2020dysat}. 
Other methods that learn node embeddings in the continuous time space may be able to predict higher-order temporal patterns by combining the node embeddings ~\cite{trivedi2019dyrep,kumar2019predicting,xu2020inductive,tgn_icml_grl2020}, but their performance was far from ideal due to their undesired generalization (See further explanations in Sec.~\ref{sec:encoder} and the experiments in Sec.~\ref{sec:exp}). Our work of encoding the contextual structure is inspired by a very recent work CAW-N~\cite{wang2020inductive}. However, their method is constructed for the task of edge prediction and not applicable to predicting higher-order patterns. 

\section{Problems Formulation}
\label{sec:form}

Temporal hypergraphs are hypergraphs with time information, in which each hyperedge corresponds to an interaction associated with a timestamp.  
\vspace{-0.5mm}
\begin{definition}
    [Temporal hypergraph] A \emph{temporal hypergraph} $\mathcal{E}$ can be viewed as a sequence of hyperedges with timestamps $\mathcal{E} = \{(e_1,t_1), (e_2,t_2), \cdots,  (e_N,t_N)\}, t_1 \leq t_2 \leq \cdots \leq t_N$,
    where $N$ is the number of hyperedges, and $e_i$, $t_i$ are the $i-$th hyperedge and its timestamp. Each hyperedge $e$ can be viewed as a set of nodes $\{v^{(e)}_{1}, v^{(e)}_{2}, \cdots, v^{(e)}_{|e|}\}$ where $|e|\geq 2$. 
\end{definition}
\vspace{-2mm}
\begin{definition}
    [Hyperedge covering] A node set $S$ is called covered by a hyperedge $e$ if $S\subseteq e$.
\end{definition}
\vspace{-2mm}
 For convenience, we use the following two concepts interchangeably in the rest of the paper: Nodes appear in an interaction and nodes are covered by a hyperedge.
 
 Edge prediction is to predict if two nodes connect to each other. 
 Here we generalize the concept of edge prediction to the higher-order pattern prediction for interaction expansion in hypergraphs.
 
\begin{definition} [Node triplets of interest] For three nodes $u, v, w$, any two of them have never interacted before timestamp $t$ and a hyperedge covers $u,v$ at time $t$ but not $w$, then we call $(\{u,v\},w,t)$ a node triplet of interest.
\end{definition}
\vspace{-2mm}
\begin{definition} [Interaction expansion] Given a node triplet of interest, say $(\{u,v\},w,t)$, interaction expansion refers to hyperedges connecting $u,v,w$ and they follow a pattern presented in Fig.~\ref{fig:illu} b-d. 
\end{definition}
\vspace{-1mm}
Compared with conventional edge prediction, interaction expansion prediction has three significant differences. First, instead of making a simple binary decision, the interaction expansion prediction becomes complicated as we incorporate multiple nodes, \eg, three patterns in interaction expansion from two nodes $\{u,v\}$ to another $w$ may happen (Fig.~\ref{fig:illu} b-d). 
Second, edge prediction in temporal graphs allows predicting the repetitive edges between two nodes at different timestamps. In contrast, interaction expansion is an incremental process where we need to predict how and when two nodes could extend their interaction for the first time. 
Although our model can also predict repetitive patterns among the same triplets, it is not the focus of this work.    
Third, this incremental process implies that any group of nodes may form a hyperedge at some time point (even after infinite time)~\cite{pishro2014introduction}. Therefore, 
it is more reasonable to set a time window $T_w$ and make predictions only within the time window. In particular, we propose three research problems based on these major differences.  

\textbf{Q1: For a triplet $(\{u,v\},w,t)$, what type of high-order interaction will be most likely to appear within $(t, t+T_w]$?}

To address this problem, we need to consider four possible interaction patterns as shown in Fig.~\ref{fig:illu}: 
1) \textbf{Edge}. There is no hyperedge covering $\{u,w\}$ or $\{v,w\}$ in $(t,t+T_w]$.
2) \textbf{Wedge}. Either $\{u,w\}$ or $\{v,w\}$ is covered by a hyperedge but not both. 
3) \textbf{Triangle}. $\{u,w\}$ and $\{v,w\}$ are covered by two hyperedges separately but there does not exist any hyperedge covering all three nodes $\{u,v,w\}$. 
4) \textbf{Closure}. All three nodes appear in at least one hyperedge. 

Note that these four patterns 
could be organized into a certain hierarchy. For instance, to form the Triangle pattern, the three nodes must first form a Wedge pattern within a shorter time. When there is a Closure pattern, hyperedges covering either $u,w$ or $v,w$ (corresponding to Wedge and Triangle) may still exist. This introduces some fundamental challenges to any computational models that try to predict these patterns. We further consider the next prediction problem to address this challenge. 

\textbf{Q2: Given a triplet $(\{u,v\},w,t)$ and a pattern in \{Wedge, Triangle, Closure\}, when will $u, v, w$ first form that pattern?}

Q2 asks for the exact timestamp of forming each pattern and thus avoids the potential overlap between these patterns. Q2 also shares some implications with Q1 so their predictions should be consistent. That is, if Q2 predicts that Wedge likely happens in a shorter time, Q1 should also assign a high probability to Wedge and vice versa. Answering Q2 may greatly benefit various time-sensitive applications. For instance, predicting the season when three dedicated travelers are grouped enables more accurate advertisement delivery for travel equipment. Predicting the exact timestamp when three genes mutually interact provides valuable information for molecular biologists to precisely control the process of cell development. 

\textbf{Q3: For a node triplet of interest, which structural features could be indicative to distinguish two patterns such as, Edge \textit{v.s.} Wedge,  Wedge \textit{v.s.} Triangle, Triangle \textit{v.s.} Closure?} 

Conventional heuristics generalized from edge prediction (\eg, Adamic–Adar Index~\cite{adamic2003friends}) have been proved as useful features to distinguish Triangle and Closure in static hypergraphs~\cite{benson2018simplicial}. We want to investigate whether \proj may leverage representation learning to discover more complicated and insightful structural features that distinguish higher-order patterns in temporal hypergraphs.

\section{Methodology}\label{sec:method}

In this section, we introduce the framework of \proj with an encoder-decoder structure. The structural encoders are designed to encode the contextual network structure around a triplet of interest. The problem-driven decoders are dedicated to providing answers to the three questions raised in Sec.\ref{sec:form}. 

\vspace{-1mm}
\subsection{Structural Encoders}
\label{sec:encoder}

To encode the structure of temporal hypergraphs, there are two fundamental challenges. 
First, temporal hyperedges that cover the same group of nodes may appear multiple times over time~\cite{lee2021thyme+}, which makes conventional structural encoders such as hypergraph GNNs~\cite{zhang2019hyper,feng2019hypergraph,yadati2020nhp} inscalable to model the entire contextual network. 
Second, as we always focus on the first time when interaction expanding appears, no historical behaviors we could rely on to make the predictions. Therefore, the encoder must be inductive and able to sufficiently represent the network context of the node triplets, which also helps to represent the fundamental dynamics governing network evolution. 
To address these challenges, we propose to use temporal random walks (TRWs) to extract the temporal structural features and pair them with asymmetric distance encoding (DE) to obtain the inductive representations, which later could be passed to a neural network to encode the DE-based TRWs. 

\textbf{Temporal Random Walks.} 
Given $(\{u,v\},w,t)$ a triplet of interest, our model \proj leverages TRWs to efficiently extracts the temporal structural features from its contextual network.  Specifically, for each node $z\in\{u,v,w\}$, \proj samples $M$ many $m$-step TRWs and group them into a set $S_z$ with only historical hyperedges sampled. The contextual structure can be represented by these three sets of TRWs $\{S_u, S_v, S_w\}$. The procedure is shown in Alg.~\ref{alg:TRW}.




 
        \begin{algorithm}
        \SetKwInOut{Input}{Input}\SetKwInOut{Output}{Output}
        \For{$j$ from $1$ to $M$, $W_j$ is the $j-$th walk}{
        
        Initialize $W_j$: $W_j[0]\leftarrow(z,t_0)$ \;
        \For{$i$ from $1$ to $m$, the $i-$th step}{
        $z_l, t_l \leftarrow W_j[i-1]$\;
        Sample one hyperedge $(e, t)\in \{(e,t) | z_l\in e, t < t_{l}\}$ with the probability $\propto$ $(|e| - 1) \exp(\alpha (t_{l}-t))$ \;
        Uniformly sample one $z' \in \{ e \setminus z_{l}\}$ \;
        $W_j[i]\leftarrow (z', t)$\;
        }
        }
        Return $S_z \triangleq \{W_j|1\leq j \leq M\}$\;
        \caption{\small{TRW Extraction ($\mathcal{E}$, $\alpha$, $M$, $m$, $z$, $t_0$)}}\label{alg:TRW}
        \end{algorithm}

\begin{table}
\vspace{-12mm}
\end{table}
 
Note that in a temporal hypergraph, more recent hyperedges with more nodes are often informative. Hence, we consider the $(|e| - 1)\exp(\alpha(t_l-t))$ as the probability to sample the hyperedge for some $\alpha\geq 0$, where $t$ is the timestamp of the corresponding hyperedge and $t_l$ is the timestamp of the previous hyperedge.
A larger $\alpha$ implies the tendency to sample more recent ones while $\alpha=0$ gives uniformly sampling. Each walk $W\in S_z$ can be viewed as a series of (node, time) pairs. 

Some previous work also applies similar TRW ideas to extract features from graphs~\cite{nguyen2018continuous}. However, their motivation of TRW is to mimic Deepwalk~\cite{perozzi2014deepwalk} for network embedding rather than improve scalability as ours. We have also tried to use hypergraph GNNs to work on the entire contextual network but have seen limited performance gain with much higher computational overhead.

\textbf{Asymmetric Distance Encoding.} To be well generalized across the time and the entire network, \proj cannot directly encode node identities in the sampled TRWs as one-hot vectors. 
However, if we simply remove node identities, the structural information underlying TRWs gets lost. Inspired by the recently proposed DE~\cite{li2020distance}, we compensate such information loss by generalizing the DE originally designed for static graphs to a new one that could fit hypergraphs. First, we encode a node $a$ based on the set $S_z$ of TRWs, denoted by $g(a;S_z)$, where the node $a$ is in $S_u\cup S_v\cup S_w$ and $z \in \{u,v,w\}$. 
The encoding $g(a;S_z)$, $z \in \{u,v,w\}$ is computed based on how many times the node $a$ has appeared on a certain position in $S_z$. In particular, we set $g(z; S_u)$ as a $(m+1)$-dim vector in which the $i$-th component can be computed as 
\begin{align}\label{eq:single-DE}
    g(a; S_z)[i] = |\{W|W\in S_z,\,a \text{ appears at }W[i]\}|,\,i=0,1,...,m
\end{align}
Another simpler encoding is to replace $g(a; S_z)$ with an indicator of an estimated shortest path distance (SPD) between $a$ and $z$, \ie, 
\begin{equation}
    \tilde{g}(a; S_z)\triangleq i,
    \text{where } i=\min\{j|\exists W\in S_z, a \text{ appears at }W[j]\}. 
    \label{eq:spd}
\end{equation} 
This encoding $\tilde{g}$ is better used to visualize our interpretation results for the question Q3, which will be further discussed in Sec.~\ref{sec:decoder}.  

Second, we aggregate the encodings $\{g(a;S_z)\}_{z\in\{u,v,w\}}$ into a vector representation $I(a|S_u,S_v,S_w)$ by considering the symmetry and asymmetry properties within the triplet $(\{u,v\},w,t)$.

It is important to properly model the symmetry (inductive bias) of a learning task, which has already yielded successful neural network architectures such as CNN for translating symmetry and GNN for permuting symmetry. 
In our setting, for a triplet $(\{u,v\},w,t)$, we have symmetry between $u$ and $v$ but they are asymmetric to $w$. 
The new DE function $I(a|S_u,S_v,S_w)$ is designed to maintain such symmetry for all the hyperedges, which could be written as 
\begin{align}\label{eq:DE}
I(a|S_u,S_v,S_w) \triangleq F(\{g(z; S_u),g(z; S_v)\},g(z; S_w))
\end{align}
Here $F$ is a mapping and the brace implies the permuting symmetry. As an instantiation, we set $I(a|S_u,S_v,S_w)=F_1((b + c)\oplus |b - c|)$, where $b = F_2(g(a; S_u)\oplus g(a; S_w))$, $c = F_2(g(a; S_v)\oplus g(a; S_w))$, $\oplus$ denotes the concatenation, both $F_1$ and $F_2$ are multi-layer perceptrons (MLP) and $|\cdot|$ is to compute element-wise absolute value. 

Then, we can represent each walk $W$ by replacing the node identity $a$ on $W$ with the encoding $I(a|S_u,S_v,S_w)$ based on
   \[((z_0,t_0), \cdots ,(z_m, t_m))\rightarrow W = ((I(z_0), t_0), \cdots, (I(z_m), t_m)),\]
where (and later) we use $I(a)$ to present $I(a|S_u,S_v,S_w)$ for brevity if causing no ambiguity. $(z_i,t_i)$ denotes the $i$-th (node, time) pair in the original walk. Next, we check two properties of our DE.

\emph{Generalizability:} 
As our problem is to predict the first interaction among a node triplet, no history among the same node triplet can be learnt from. Successful models must learn from the interactions of other triplets and further generalize such knowledge. We can actually show that our DE is generalizable (See proof in Appendix~\ref{sec:DE-explain}). Specifically, if two node triplets share the same historical contextual network structure, then any nodes that appear on TRWs sampled for these two node triplets hold the same DE distributions. In comparison, most previous approaches use node embeddings to compress the node-personalized history and update those embeddings over time. Our experiments in Sec.~\ref{sec:exp} show that the noise easily gets accumulated over time into node embeddings for these baselines, which yields undesired generalization performance. 

\emph{Expressivity:} We can also show our asymmetric DE is more expressive than the symmetric DE proposed in \cite{li2020distance}. Using our notation, the symmetric DE adopts a set-pooling $F'$,  $I_{\text{sym}}(a|S_u,S_v,S_w) \triangleq F'(g(z; S_u),g(z; S_v),g(z; S_w))$ which fits for the case when $u,v$ and $w$ are all symmetric. However, the three nodes are not symmetric in a node triplet $(\{u,v\},w,t)$. So, adopting $I_{\text{sym}}$ limits the model distinguishing  power, for which we show one example in Fig.~\ref{fig:asym-DE}. Note that one may consider setting the above $F'$ as the concatenating operation by even ignoring the symmetry between $u$ and $v$. Although our experiments show just marginal performance decay by using concatenation, it can hardly be generalized to predict even higher order of the patterns. This is because the dimension of concatenation linearly depends on the order of patterns. 


\begin{figure}
    \includegraphics[trim={1.8cm 10.9cm 7.8cm 4.3cm},clip,width=\columnwidth]{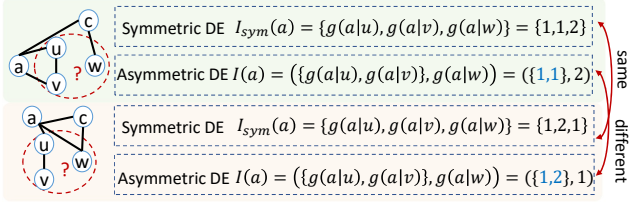}
    \vspace{-8mm}
    \caption{\small{Illustration of Symmetric DE and Asymmetric DE. For simplicity, we use static graphs instead of temporal hypergraphs to illustrate. Moreover, $g(a|z)$ here adopts the shortest path distance from $z$ to $a$ instead of the count of random-walk hits.}}
    \label{fig:asym-DE}
    \vspace{-4mm}
\end{figure}

\textbf{Neural-Network Encoding TRWs.} Now for each triplet, we collect a set of random walks in $S_u\cup S_v\cup S_w$, on which the nodes are encoded by DEs. Next, we employ neural networks to encode these walks into the final representation of this triplet. 
We adopt a RNN~\cite{rumelhart1986learning,hochreiter1997long} to encode each walk $W\in S_u\cup S_v\cup S_w$ by $\text{Enc}(W) = \text{RNN}(\{I(z_i)\oplus F_3(t_i)\}_{i=0}^m)$ where $F_3(\cdot)$ is a time encoder based on learnable Fourier features~\cite{kazemi2019time2vec,xu2019self} by following
\begin{equation*}
    F_3(t_i) \triangleq [\cos(\beta_1t_i)+\phi_1, \cos(\beta_2t_i)+\phi_2, \cdots, \cos(\beta_dt_i)+\phi_d],
\end{equation*}
where $\beta_j,\,\phi_j,\,j=\{1, 2, \cdots, d\}$ are learnable parameters. 
The learnable Fourier features increase the dimension of the scaled time value. Ensured by the Bochner's Theorem~\cite{bochner1992theorem}, it achieves more expressiveness while keeping invariant to time shift. 

We then apply a set pooling method AGG to calculate the encoding of nodes $u$, $v$, and $w$. 
\vspace{-1mm}
\begin{align} \label{eq:node-embedding}
     \psi(z) = & \text{AGG}(\{Enc(W)\}_{W\in S_z}),\, \text{for $z\in \{u,v,w\}$}.
\end{align}
We instantiate AGG as self-attention pooling and mean pooling. Both significantly outperform the baselines and self-attention pooling always achieves slightly higher performance than mean pooling. However, the mean pooling operation is more interpretable. Therefore, we choose mean pooling in our analysis for the question Q3 while self-attention pooling for Q1, Q2. Their specific forms are as follows: For a set of walks $S_z=\{W_1,W_2,...,W_M\}$,
\small{Self-Att($S_z$) = $\frac{1}{M}\sum_{i=1}^{M} \text{softmax}(\{\text{enc}(W_i)^T \Theta_1 \text{enc}(W_j)\}_{1\leq j\leq M}) \text{enc}(W_i) \Theta_2$}\normalsize, where $\Theta_1$, $\Theta_2$ are learnable parameter matrices, and 
\small{Mean($S_z$) = $\frac{1}{M}\sum_{i=1}^{M} \text{enc}(W_i)$}. \normalsize
 \vspace{-1mm}
\subsection{Problem-driven Decoders}\label{sec:decoder} 

Unlike the encoder, we need to design different decoders to answer three questions defined in Sec.\ref{sec:form}. To simplify the notations, we denote the node triplet of interest $(\{u,v\},w,t)$ as the letter $\tau$.

\textbf{Decoder for Q1.} Q1 is to predict the types of higher-order patterns of the interaction expansion and thus can be formalized as a standard multi-class classification problem. Similar to DE, we also consider the symmetric property of the problem when designing the decoder function and predict the probability distribution over the four different patterns as $\hat{Y}(\tau) \triangleq \text{softmax}(F_4((\psi(u) + \psi(v))\oplus  \psi(w)))$,
where $F_4$ is an MLP, $\hat{Y}(\tau)$ is a normalized 4-dim vector (for four classes). We train the model by minimizing the cross entropy loss between $\hat{Y}(\tau)$ and the ground truth pattern $Y(\tau)$.






 \textbf{Decoder for Q2.} In this task, we aim to model the distribution of the time when a certain interaction pattern appears after observing a node triplet of interest. We adopt a log-normal mixture distribution to model the potential multimodality of the distribution of the time. Given a pattern $p\in$ \{Wedge, Triangle, Closure\}, we assume the predicted timestamp $\hat{t}$ follows the distribution, $\log \hat{t}_{\tau,p} \sim \sum_{i=1}^k w_{p,i} \mathcal{N}(\mu_{p,i},\sigma_{p,i}^2)$, 
where the weight $w_p$, the mean $\mu_p$ and the standard deviation $\sigma_p$ are parameterized as follows. 
 \begin{align*}
    w_p = \text{softmax}(F_{w_p}(h)),\,\mu_p = F_{\mu_p}(h),\,\sigma_p^2 = \exp(F_{\sigma_p^2}(h)),
    \end{align*}
where $h = (\psi(u)+\psi(v))\oplus \psi(w)$, and $F_{w_p}$, $F_{\mu_p}$ and $F_{\sigma^2_p}$ are three MLPs with different parameters for different pattern $p$'s. 

To train the time predictor, we maximize the likelihood for each node triplet $\tau$. That is,  if the first time $\tau$ shows the pattern $p$ is $t+t_{\tau,p}$, we calculate the negative log-likelihood loss (NLL) as follows: \vspace{-1mm} 
\small{
\begin{equation}\label{eq:NLL}
NLL(\tau, p) = - \log \left[\sum_{i=1}^k w_{p,i}^{(l)} \frac{1}{\sqrt{2\pi\sigma_{p,i}^2}} \exp\left(-\frac{(\log t_{\tau,p}-\mu_{p,i})^2}{2\sigma_{p,i}^2}\right)\right].
 \end{equation}} \vspace{-1.5mm}
 \normalsize{}
\textbf{Decoder for Q3.} The decoder for Q3 is expected to hold certain interpretability. We aim to find the most discriminatory structural features for different higher-order patterns. 

There are two challenges: a) How to represent structural features in an interpretable way; b) How to characterize their importance to the predicted higher-order patterns. In what follows, we address challenge a) by categorizing TRWs according to DE properly. We address challenge b) by characterizing the linear impact of TRWs on the predicted higher-order patterns, which is more interpretable.

TRWs could be viewed as network motifs~\cite{kovanen2011temporal} or potentially informative temporal structures. TRWs whose nodes are paired with different DE distributions naturally characterize different types of structural features. So we categorize TRWs according to the DEs of the nodes. To obtain more visualizable results, we adopt the simpler DE $\tilde{g}$ (Eq.\eqref{eq:spd}), i.e., shortest path distance instead of $g$ (Eq.\eqref{eq:single-DE}). Note that $\tilde{g}$ can be obtained via a surjective mapping of the original $g$ and thus has less representation power. However, we observe \proj equipped with $\tilde{g}$ still significantly outperforms all baselines to identify higher-order patterns.

To impose linear impact, we adopt a simple linear logistic regression with DE-based TRWs as features. To identify two different patterns  $p_1,p_2$, each DE-based TRW, after neural encoding, will be tranformed in one score $C_W$. The value of $C_W$ essentially characterizes how likely the TRW $W$ may induce $p_1$ as opposed to $p_2$. Those TRWs with the largest and smallest $C_W$ are the most discriminative features. Specifically, we  minimize the following loss function

\small{
\begin{align}
    \sum_{\tau: Y(\tau)\in \{p_1,p_2\}}&\ell(x(\tau), Y(\tau)),\quad 
    \text{where}\; x(\tau)= \sum_{W\in S_u\cup S_v\cup S_w} C_W + b, \label{eq:TRW-score}
\end{align}}
\normalsize{and where $\ell(x, Y) = x *1_{Y=p_1} - \log(1+ \exp(x))$,  $Y(\tau)$ is the ground truth label of a node triplet $\tau$. We compute $C_W = B^T Enc(W)$ ($Enc(W)$ defined in Eq.\eqref{eq:node-embedding}), $B, b$ are also trainable weights and bias. 
Eq.\eqref{eq:TRW-score} can be achieved by setting AGG in Eq.\eqref{eq:node-embedding} as mean pooling plus a learnable bias. A larger $C_W$ refers to a TRW that is more indicative to $p_1$  as opposed to $p_2$ and vice versa.}

\begin{table*}[tp]
\begin{minipage}{0.58\textwidth}
\small
\begin{center}
\scalebox{0.9}{
\begin{tabular}{c c c c c c c c }
\toprule  & \footnotesize{\# Node} & \footnotesize{\# Edge}$^*$ & \footnotesize{Avg.}$^*$ & \footnotesize{Std.}$^*$& \footnotesize{\# Closure} & \footnotesize{\# Triangle} & \footnotesize{\#Wedge} \\
\midrule
tags-math-sx & 1.63K & 822K & 2.75 & 0.89 & 23.6K & 68.5K & 3.07M \\
tags-ask-ubuntu & 3.03K & 271K & 3.11 & 1.02 & 73.2K & 156K & 9.43M\\
congress-bills & 1.72K & 260K & 7.55 & 6.68 & 4.27M & 4.02M & 26.4M \\
DAWN & 2.56K & 2.27M & 2.59 & 1.22 & 45.8K & 62.3K & 3.13M \\
threads-ask-ubuntu & 125K & 192K & 2.30 & 0.63 & 7.44K & 23.9K & 3.99M \\
\bottomrule 
\end{tabular}
}
\end{center}

\end{minipage}
\hfill
\scalebox{0.95}{
\begin{minipage}{0.40\textwidth}
\includegraphics[trim={0.3cm 0.4cm 0.5cm 0.7cm},clip,width=0.49\textwidth]{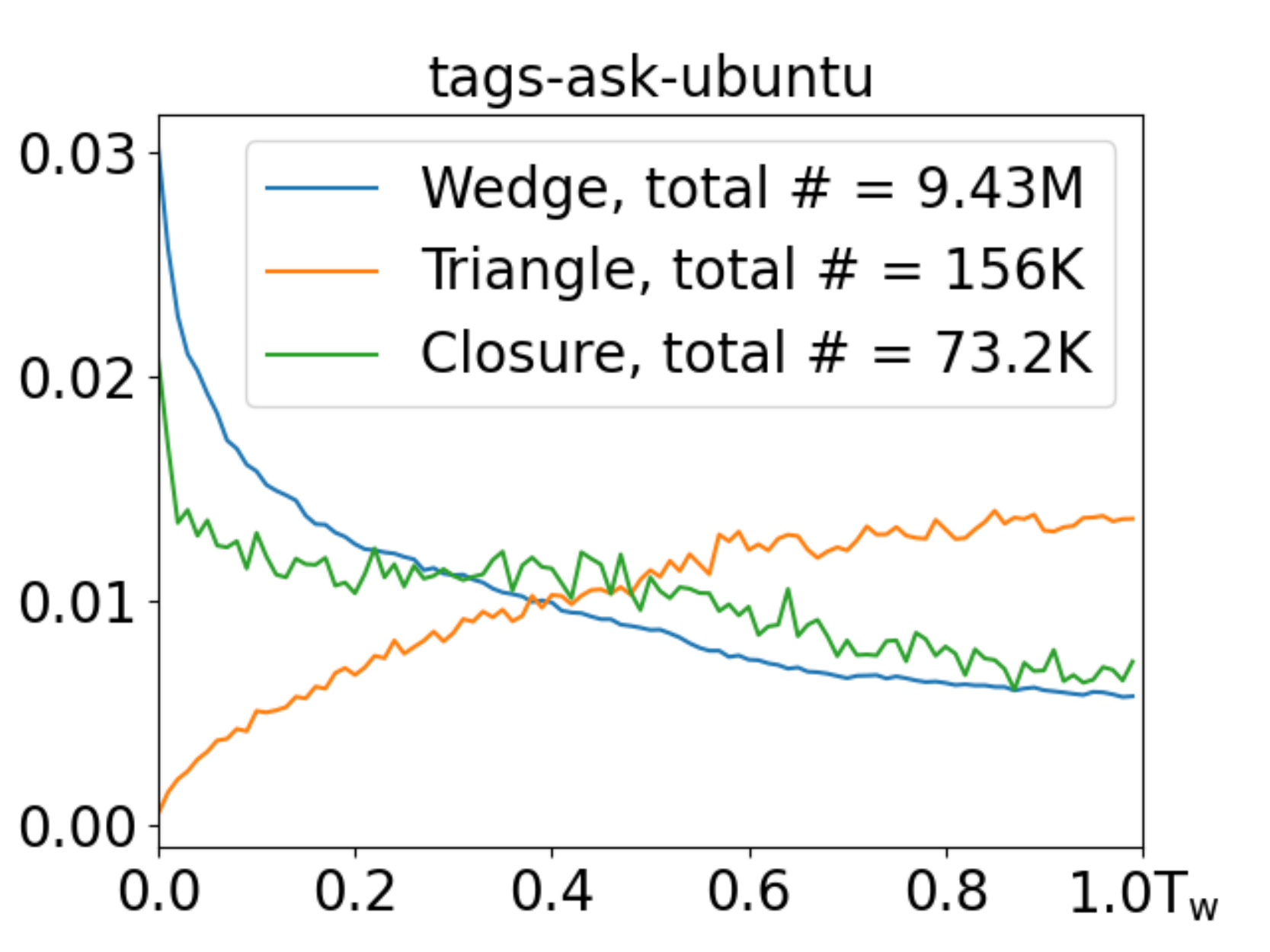}
\includegraphics[trim={0.3cm 0.4cm 0.5cm 0.7cm},clip,width=0.49\textwidth]{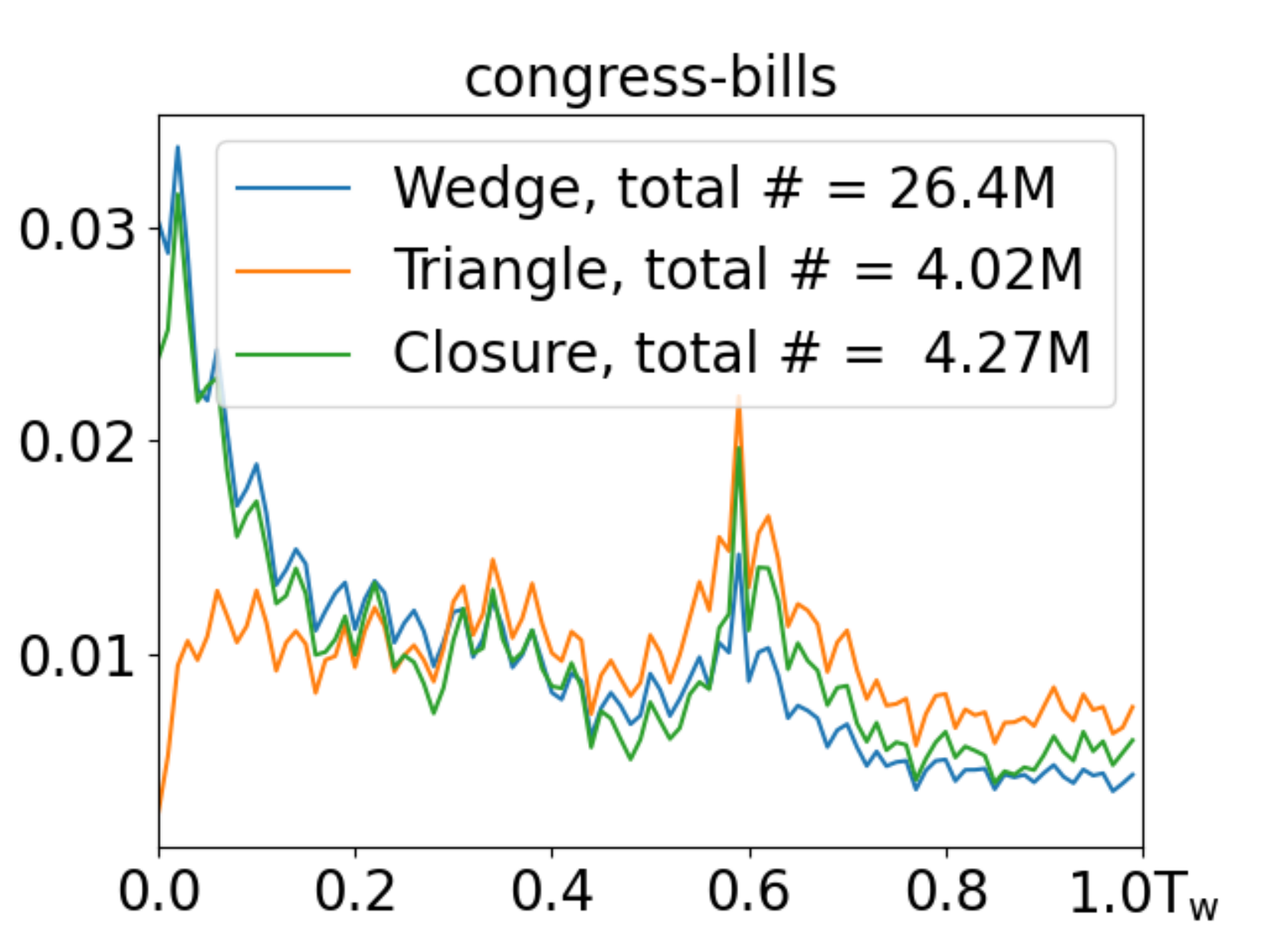}

\end{minipage}
}

\caption{\small{Basic statistics of the datasets. a) The numbers of different patterns in \{Wedge, Triangle, Closure\} within a time window $T_w = 0.1 * \text{the total time range}\,$ after initial interactions between the first two nodes get observed. * Edge is hyperedges. Ave. and Std. are averaged size of hyperedges and derivation. b) shows the distributions (P.D.F.) of the times used to expand such interactions to the third nodes.}} 
    \label{fig:data-stat}
    \vspace{-6mm}
\end{table*}

\begin{textblock*}{1cm}(1.5cm,2.98cm) 
   \textbf{(a)}
\end{textblock*}
\begin{textblock*}{1cm}(12.00cm,2.98cm) 
   \textbf{(b)}
\end{textblock*}

\section{Experiments}

\label{sec:exp}


Here, we conduct an extensive evaluation of \proj to answer the three questions raised in Sec.\ref{sec:form}.

\textbf{Datasets.} We use five real-world temporal hypergraph datasets collected at~\cite{benson2018simplicial}: Tags-math-sx, Tags-ask-ubuntu, Congress-bills, DAWN, Threads-ask-ubuntu. Their detailed description is left in Appendix~\ref{sec:data}. The basic statistics of these datasets are shown in Table~\ref{fig:data-stat}. Note that repetitive temporal hyperedges significantly enlarge the network. 
Since the complexity is mainly dominated by the number of hyperedges (from $192K$ to $2.27M$), scalable models are needed. Moreover, the higher-order patterns can be extremely unbalanced. We show the distributions of time used in interaction expanding for different patterns 
in Table~\ref{fig:data-stat} b). Triangles as expected take more time than Wedges while Closures unexpectedly share a similar tendency as Wedges.

\textbf{Data Preprocessing and Setup.} Ideally, in temporal hypergraphs, hyperedges continue appearing over time but the datasets only include a certain time range $T$. To avoid the effect of the boundary truncation, we focus on all the node triplets between $[0.4T, 0.9T]$. We use the hyperedges in $[0T, 0.4T]$ to set up the initial network. Node triplets in $[0.4T, 0.75T]$ are selected for model training, $[0.75T, 0.825T]$ for validating and $[0.825T, 0.9T]$ for test. We also set the time window defined in Q1 as $T_w=0.1T$, so the hyperedges in $[0.9T, T]$ essentially complete the interaction expansion for the node triplets in the test set. We enumerate all the Edges, Wedges, Triangles, and Closures as the data instances with ground truth labels and properly balance all classes (see more details in Appendix~\ref{sec:proc}). 
The performance of all methods is averaged across five-time independent experiments. 




\textbf{Baselines.} To evaluate our method, we compare \proj with ten baseline methods, including both heuristic methods and NN-based encoders. More detailed description is left in the Appendix~\ref{sec:base}.

\emph{Heuristic methods} are the metrics previously used for edge prediction~\cite{liben2007link} and recently got generalized for hyperedge prediction~\cite{benson2018simplicial,yoon2020much,nassar2020neighborhood}, including $3$-way Adamic Adar Index (\textbf{$3$-AA}), $3$-way Jaccard Coefficient (\textbf{$3$-JC}), $3$-way Preferential Attachment(\textbf{$3$-PA}), and the Arithmetic Mean of traditional \textbf{AA}, \textbf{JC} and \textbf{PA}.  
    Given a node triplet $(\{u,v\},w,t)$, we project hyperedges before $t$ into a static hypergraph, compute these metrics, and then impose a two-layer nonlinear neural network for prediction. 
 \emph{NN-based encoders} show great power in learning graph representations, but none of the previous models can be directly applied to hyperedge prediction in temporal hypergraphs let alone the full spectrum of higher-order patterns.
 We adopt a hypergraph representation learning model \textbf{NHP}~\cite{yadati2020nhp} which is proposed for hyperedge prediction in static hypergraphs. To adapt to temporal hypergraphs, we aggregate hyperedges into 10 hypergraph snapshots. We train an \textbf{NHP} over these snapshots to get temporal node embeddings and apply our decoders to make predictions. 
 We choose baselines \textbf{JODIE}~\cite{kumar2019predicting}, \textbf{TGAT}~\cite{xu2020inductive}, and \textbf{TGN}~\cite{tgn_icml_grl2020} as they are representatives models over temporal graphs. For a triplet $(\{u,v\},w,t)$, we use \textbf{JODIE}, \textbf{TGAT}, and \textbf{TGN} to generate node embeddings of $u,v,w$ and apply our decoders on these embeddings to make predictions. 


\begin{table*}[!ht]
\begin{minipage}{0.65\textwidth}
\resizebox{\textwidth}{!}{
\begin{tabular}{c c c c c c c c}
\toprule ~  & \small{tags-math-sx} & \small{tags-ask-ubuntu}  & \small{congress-bills} & \small{DAWN} & \small{threads-ask-ubuntu}\\
\midrule

3-AA & 61.86 $\pm$ 0.25 & 68.31 $\pm$ 0.10 & 68.20 $\pm$ 0.21 & 72.75 $\pm$ 1.36 & 63.86 $\pm$ 0.16\\

3-JC & 58.62 $\pm$ 0.79 & 65.24 $\pm$ 0.18 & 69.22 $\pm$ 0.50 & 70.67 $\pm$ 0.59 & 63.59 $\pm$ 0.79\\

3-PA & 58.63 $\pm$ 0.79 & 55.96 $\pm$ 2.08 & 50.69 $\pm$ 0.62 & 51.48 $\pm$ 1.36 & 63.49 $\pm$ 1.49\\

AA & 65.05 $\pm$ 0.52 & 69.74 $\pm$ 0.28 & 70.22 $\pm$ 0.73 & 75.71 $\pm$ 0.36 & 76.52 $\pm$ 0.36 \\


JC & 60.76 $\pm$ 0.34 & 67.07 $\pm$ 0.11 & 68.24 $\pm$ 0.37 & 70.48 $\pm$ 0.63 & 68.84 $\pm$ 1.77\\

PA & 56.53 $\pm$ 2.04 & 65.71 $\pm$ 1.53 & 53.48 $\pm$ 3.19 & 65.60 $\pm$ 3.14 & 70.03 $\pm$ 1.10\\


NHP & 50.78 $\pm$ 0.29 & 51.42 $\pm$ 0.23 & 49.48 $\pm$ 0.12 & 49.90 $\pm$ 0.07 & 52.04 $\pm$ 0.79 \\

TGAT & 63.27 $\pm$ 0.27 & 53.75 $\pm$ 0.16 & 56.68 $\pm$ 0.22 & 46.63 $\pm$ 0.16 & 81.92 $\pm$ 0.25\\

JODIE & 56.68 $\pm$ 0.12 & 61.84 $\pm$ 0.48 & 63.50 $\pm$ 0.24 & 52.66 $\pm$ 0.27 & 67.23 $\pm$ 0.58 \\

TGN & 55.79 $\pm$ 0.16 & 54.33 $\pm$ 0.67 & 66.63 $\pm$ 0.23 & 71.84 $\pm$ 0.31 & 80.13 $\pm$ 0.79\\



\bf{\proj} & \bf{74.07 $\pm$ 0.46} & \bf{78.83 $\pm$ 0.43} & \bf{79.83 $\pm$ 0.61} & \bf{78.92 $\pm$ 0.61} & \bf{84.22 $\pm$ 0.68} \\






\bottomrule 

\end{tabular}
}
\caption{\small{1-vs-1 AUC (mean$\pm$std) of higher-order pattern prediction (results for Q1): 1-vs-1 AUC is obtained by averaging AUCs of comparing the ground-truth class \text{v.s.} every other class in the multi-class classification. 
} 
}\label{tab:problem1_performance}
\end{minipage}
\hfill
\begin{minipage}{0.33\textwidth}
\vspace{-3mm}
\includegraphics[trim={0.3cm 0.5cm 0.5cm 0.7cm},clip,width=0.49\textwidth
]{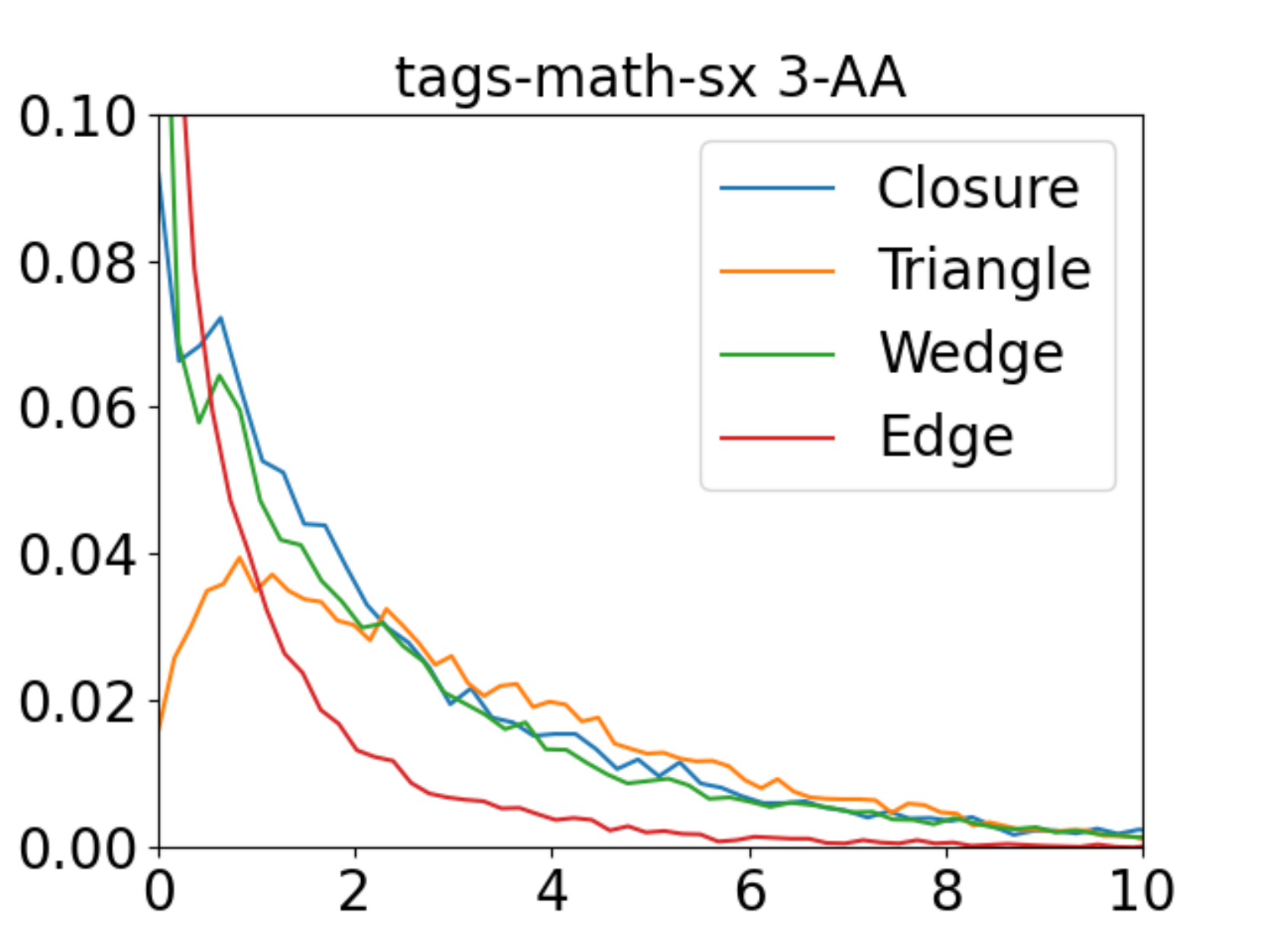}
\includegraphics[trim={0.3cm 0.5cm 0.5cm 0.7cm},clip,width=0.49\textwidth
]{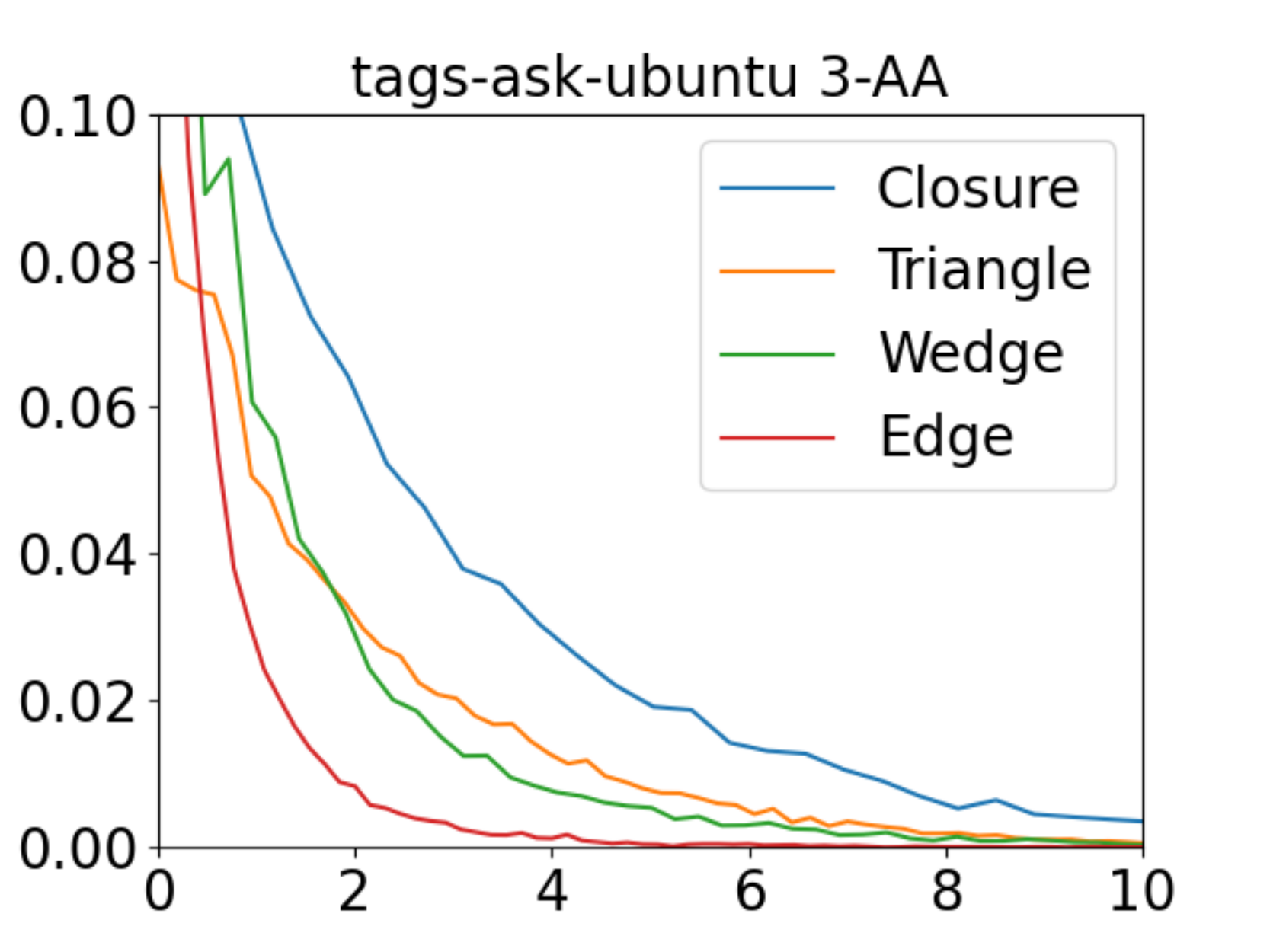}

\includegraphics[trim={0.3cm 0.5cm 0.5cm 0.7cm},clip,width=0.49\textwidth
]{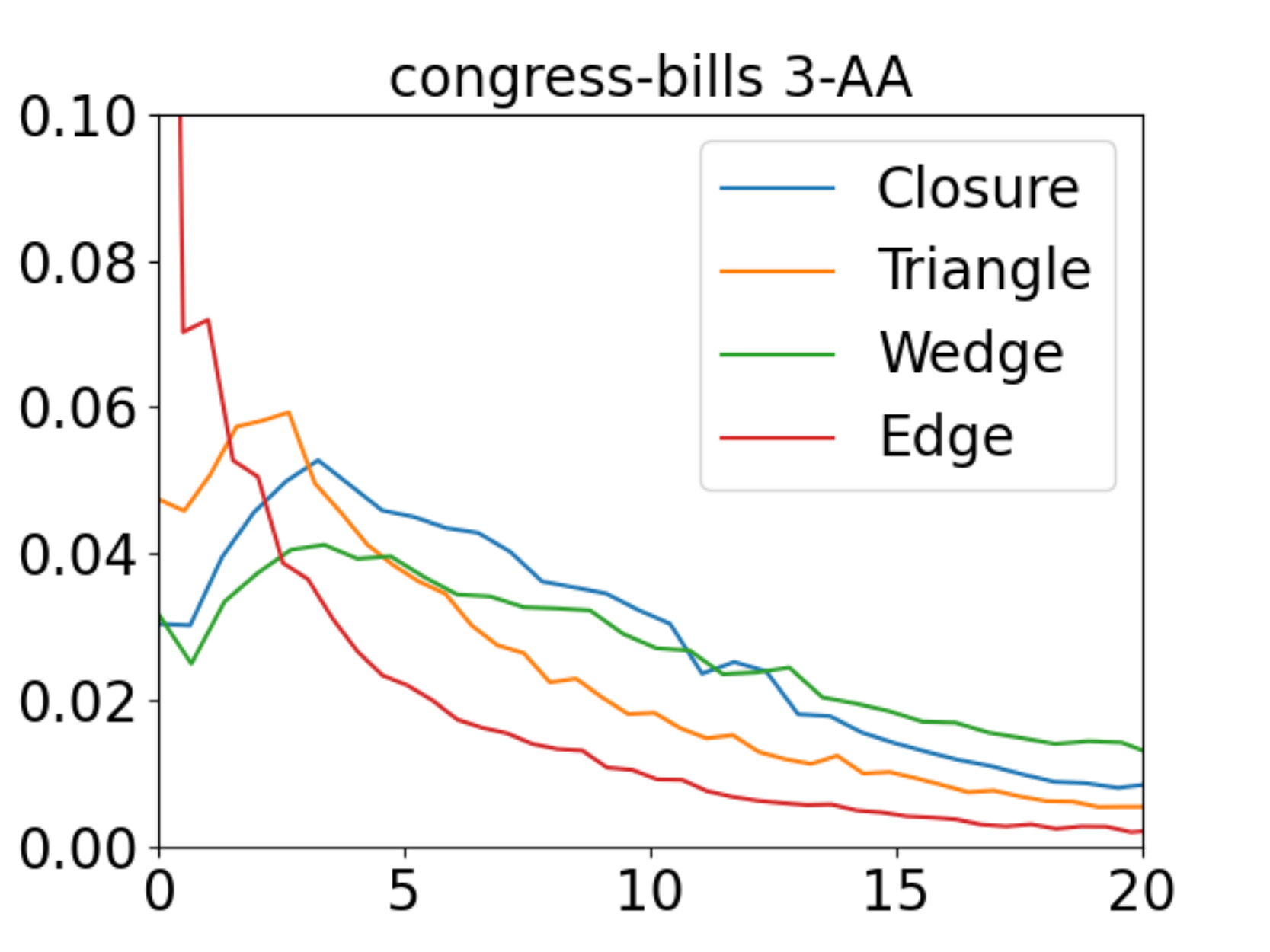}
\includegraphics[trim={0.3cm 0.5cm 0.5cm 0.7cm},clip,width=0.49\textwidth
]{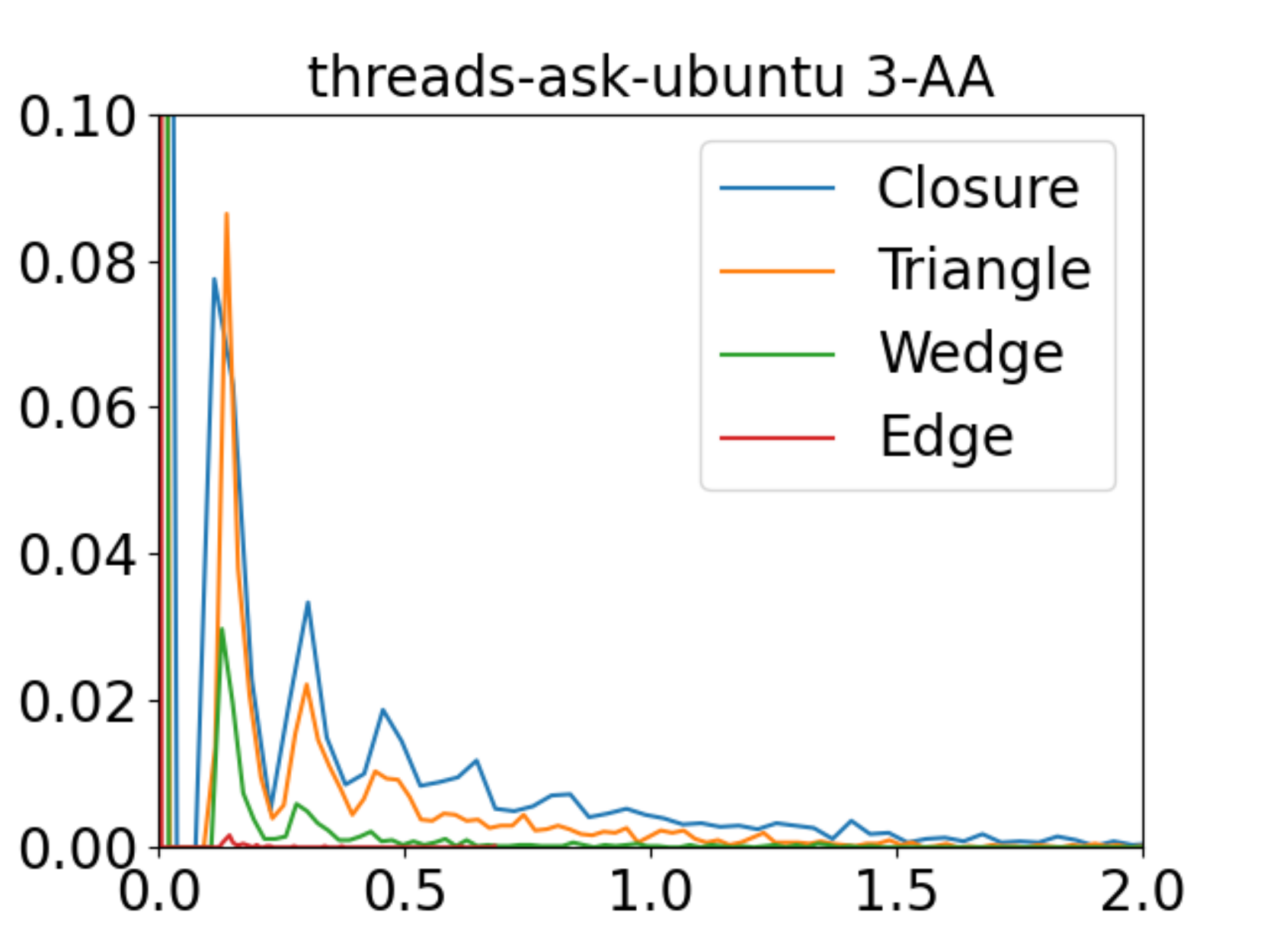}

\vspace{-3mm}
\captionof{figure}{\small{The distributions (PDF) of the arithmetic means of AA's between $(u,v)$, $(u,w)$ and $(v,w)$ for triplets $(\{u,v\},w,t)$ over two datasets DAWN and tags-ask-ubuntu. Distribusions of different classes get severely mixed.}} \label{fig:aa_distribution}
\end{minipage}
\end{table*}

\begin{table*}
     \vspace{-6mm}
    \includegraphics[trim={3cm 1.6cm 4.2cm 1.0cm},clip, height=0.165\textwidth]{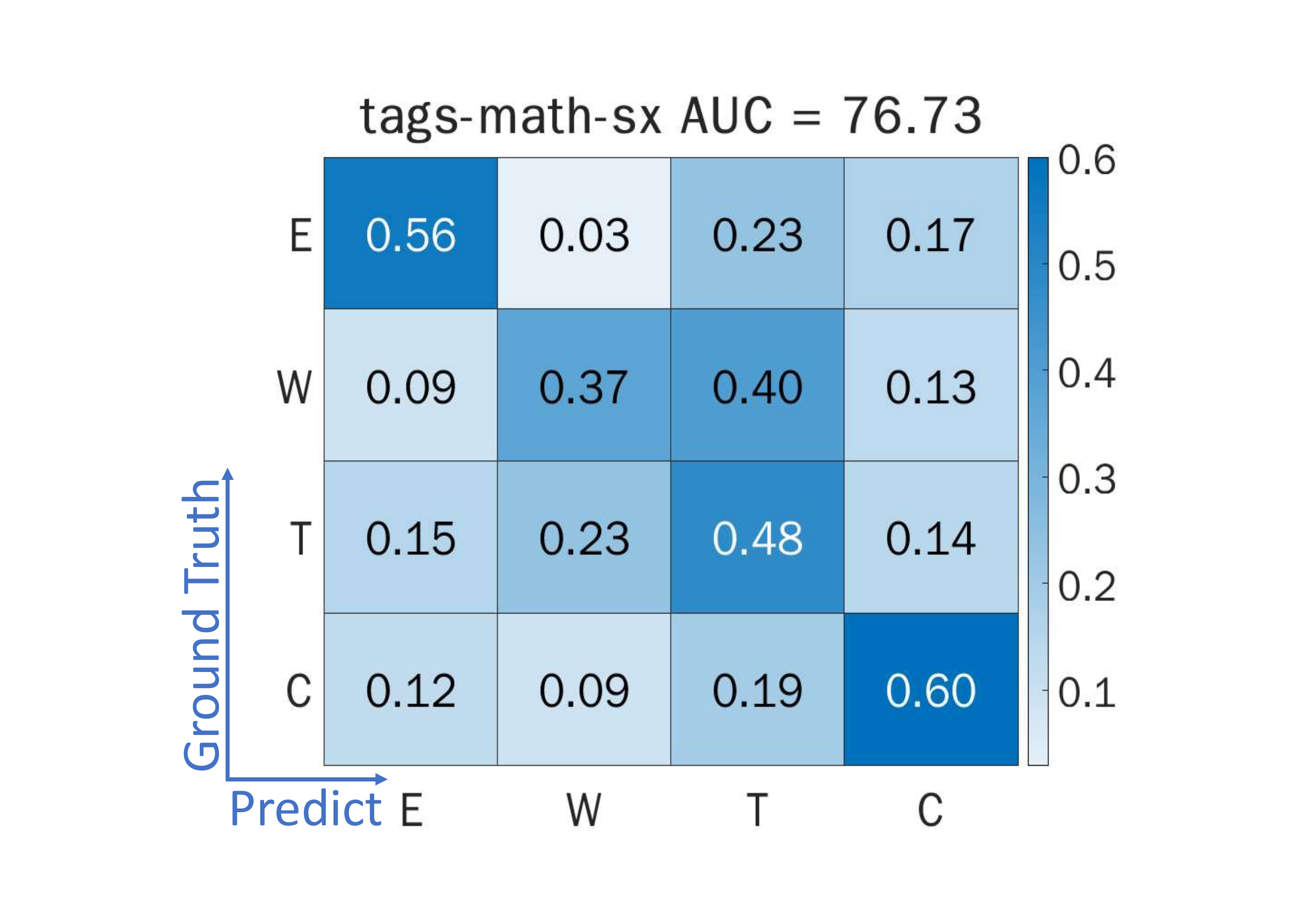}
    \includegraphics[trim={6cm 1.6cm 4.2cm 1.0cm},clip, height=0.165\textwidth]{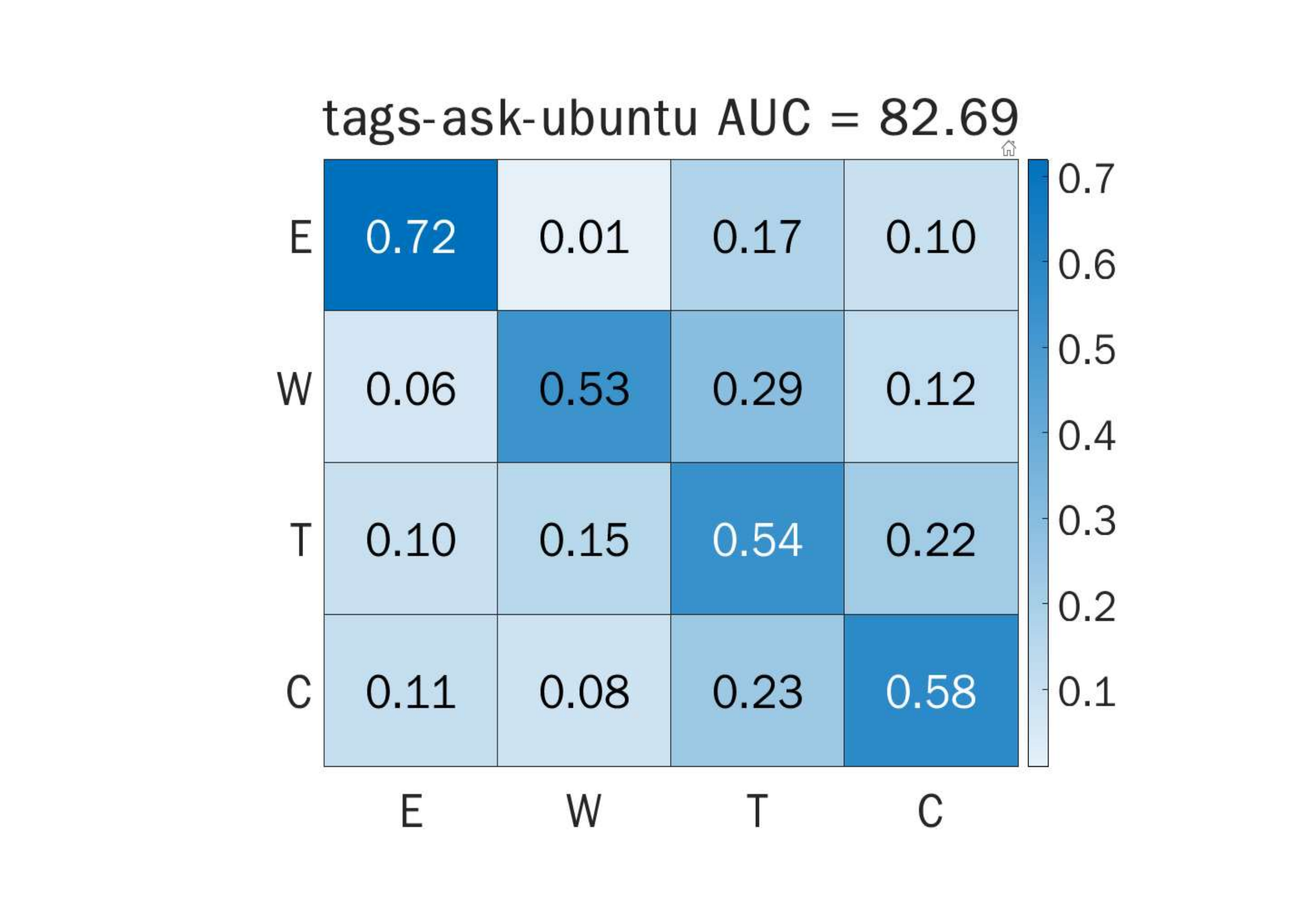}
    \includegraphics[trim={6cm 1.6cm 4.2cm 1.0cm},clip, height=0.165\textwidth]{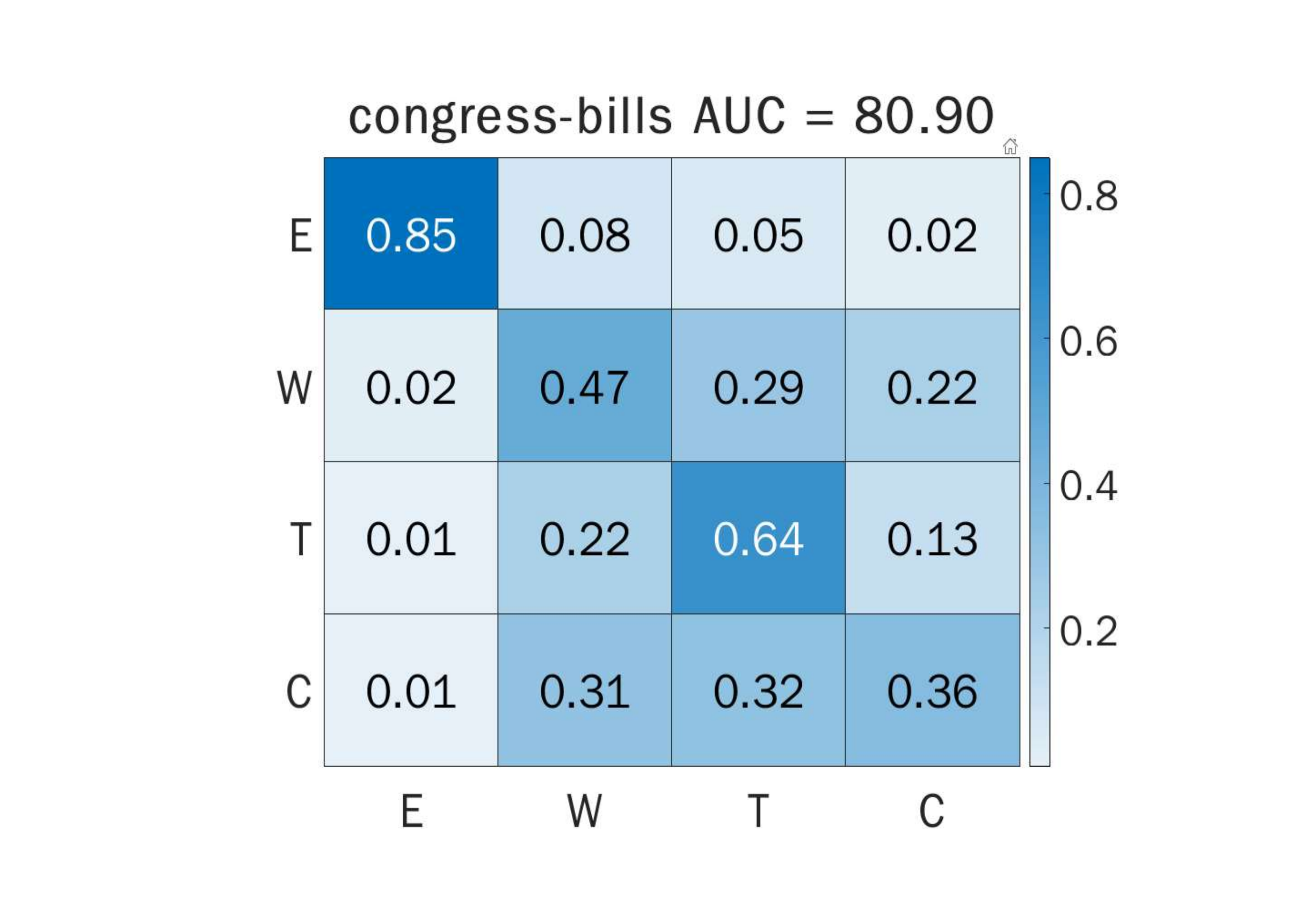}
    \includegraphics[trim={6cm 1.6cm 4.2cm 1.0cm},clip, height=0.165\textwidth]{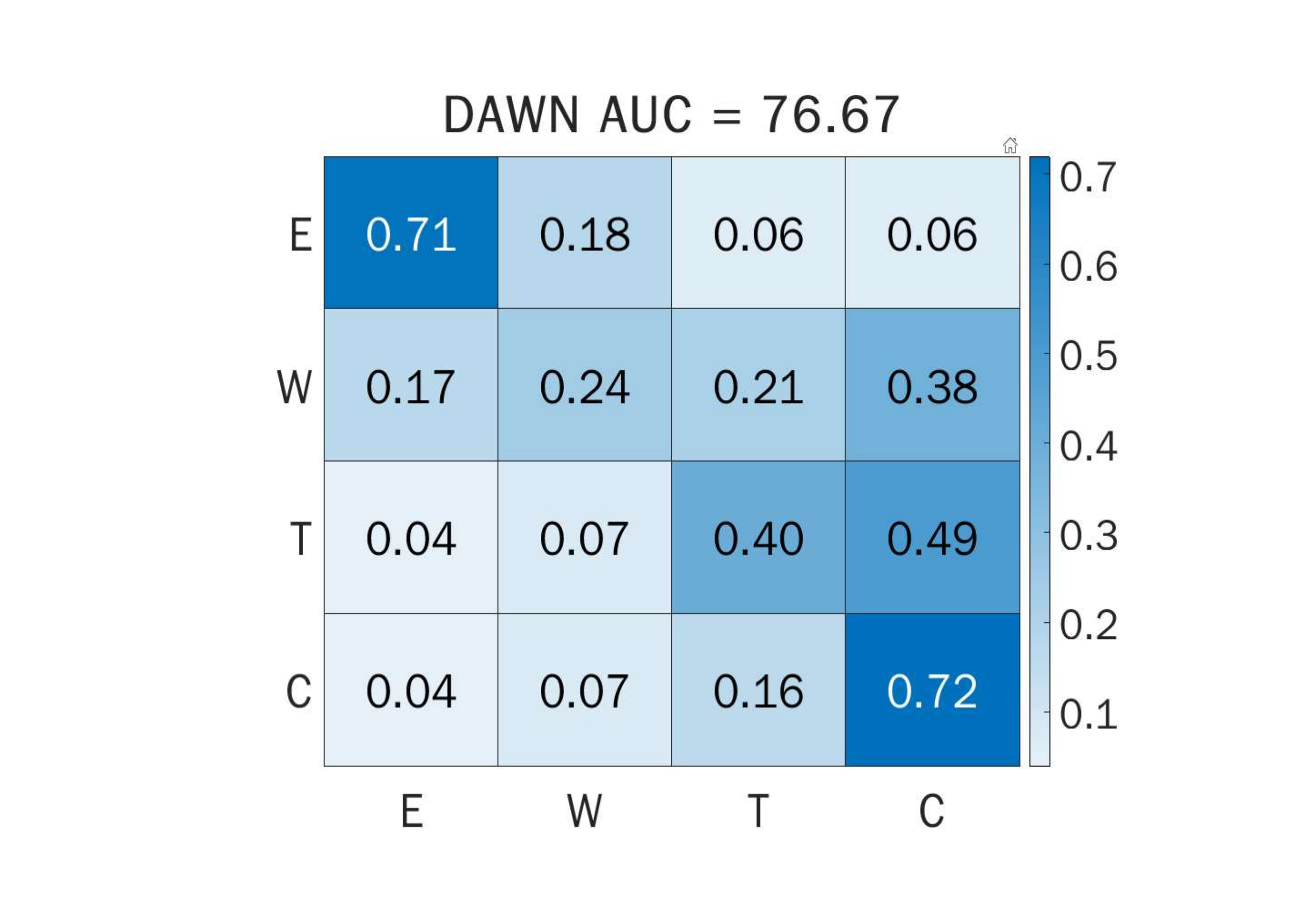}
    \includegraphics[trim={6cm 1.6cm 4.2cm 1.0cm},clip, height=0.165\textwidth]{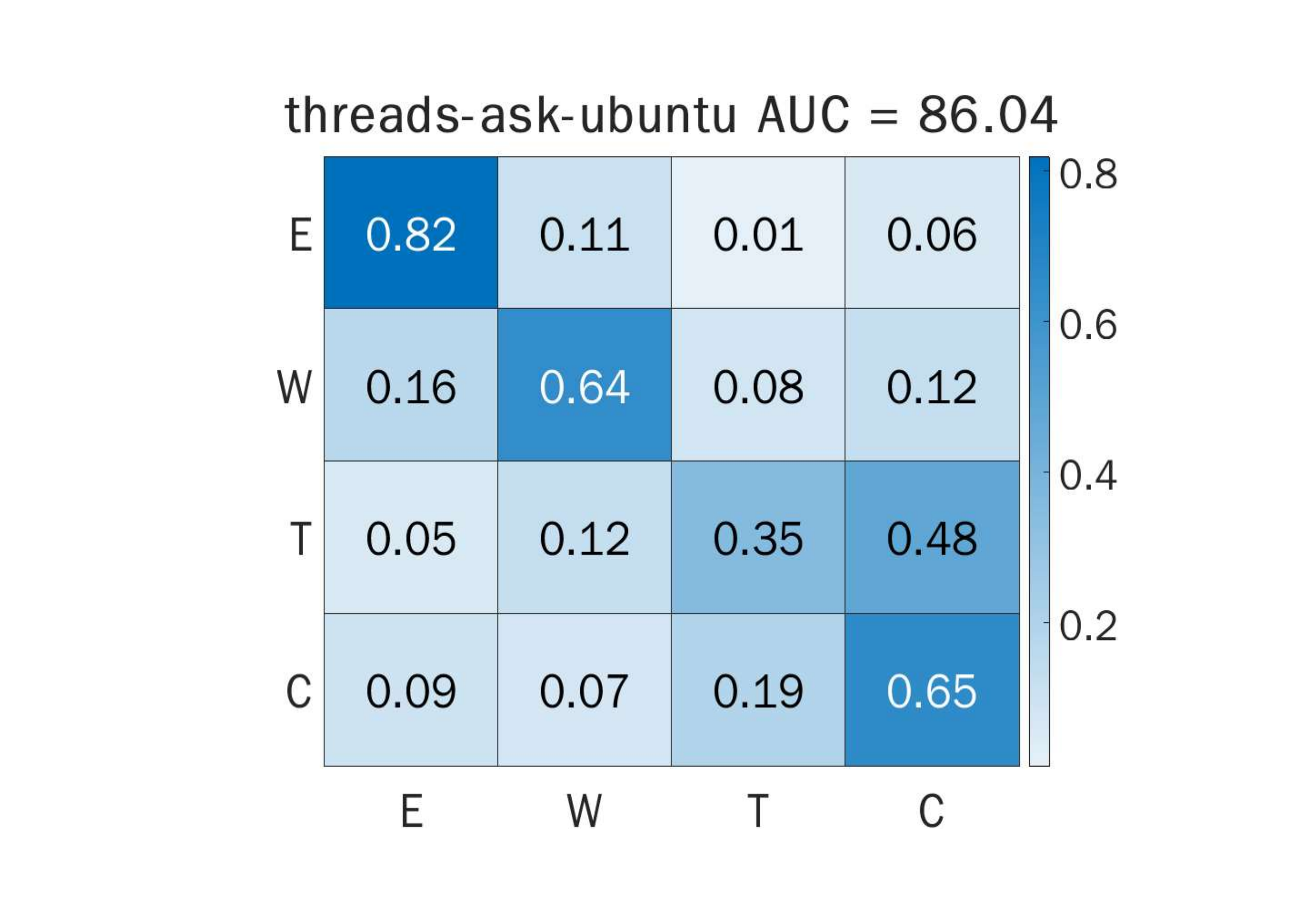}
    \vspace{-1mm}
    \caption{\small{The confusion matrices of \proj to make prediction over $5$ datasets and the corresponding AUC scores. E, W, T and C stand for (E)edge, (W)edge, (T)riangle, and (C)losure. The x-axis is the predicted label, and y-axis is the ground truth. }    }
    \label{tab:confusion_matrix}
    \vspace{-8mm}
    
\end{table*}



\vspace{-2mm}
\subsection{Predicting Higher-order Patterns (for Q1)}
The results are shown in Table~\ref{tab:problem1_performance}. All baselines perform much worse than \proj, which confirms the need for more powerful models to extract subtle structural features to indicate higher-order patterns. 

As the training and test datasets are collected from the networks in different time ranges and do not share the exact same data distribution, some baseline approaches do not generalize well and achieve even less than 0.5 AUCs. 
Previous works~\cite{benson2018simplicial,yoon2020much} demonstrate some possibilities of using heuristic metrics (AA, PA, JC) to distinguish Triangles and Closures in static hypergraphs. However, our results show that these metrics fail to give desired predictions when time constraints are incorporated in the experimental setting. 
We further analyze
the distributions of these metrics and 
find they get severely entangled across different classes in Fig.~\ref{fig:aa_distribution}.
JODIE~\cite{kumar2019predicting} and TGN~\cite{tgn_icml_grl2020} do not work well because they track node embeddings over time which accumulate noise. Such noise does harm to transfer the knowledge of indicative patterns across different node triplets. TGAT~\cite{xu2020inductive}  claims to be inductive and does not accumulate the noise, while it still performs poorly.
There are two main reasons: First, there is a node ambiguity issue in TGAT as illustrated in~\cite{li2020distance,wang2020inductive,zhang2021labeling}. To solve this issue, TGAT needs to have high-quality node or edge attributes, which may be hard to get in practice. Second, TGAT is supervised by lower-order link prediction, which is different from high-order pattern prediction. NHP~\cite{yadati2020nhp} uses network snapshots that lose much temporal information.  

To better interpret the results, we show the confusion matrices of \proj in Table~\ref{tab:confusion_matrix}.
\proj identifies four patterns reasonably well in all these datasets. Interestingly, \proj tends to predict true Edges incorrectly more as Triangles/Closures than as Wedges in the tags datasets. In these datasets, triplets that share similar contextual structures are likely to either evolve to Triangles/Closures in a short time or avoid interaction expansion instead of keeping being Wedges. Another finding is that it is typically hard to distinguish between wedges and triangles. This is because to form a triangle, the triplet must first form a wedge. Thus, more accurate temporal information is needed, \ie, small time granularity with a long enough time window is expected. For some datasets, such as DAWN, have a large time granularity (quarters) and a small time window (4 quarters). It is naturally hard to determine whether the triplet will form a wedge or triangle. Consequently, our model is unable to capture enough temporal information for a better prediction. 

We also show some ablation study of \proj in Table~\ref{tab:ablation}. We demonstrate the significance of time-based TRW sampling. Also, DEs that capture both the symmetry and asymmetry of the patterns perform much better than the sum pooling operation adopted in \cite{li2020distance}, which is crucial to distinguish the patterns such as Wedge \textit{v.s.} Triangle.
\vspace{-2mm}
\begin{table}[h]
\begin{tabular}{c | c | c | c  } 
\toprule  
\small{$\triangle$ AUC} & $\alpha = 0$ & \small{Rv. DE}  & \small{Sym. DE~\cite{li2020distance}} \\
\midrule
tags-ask-ubuntu & -1.59 & -4.42 & -3.13   \\
threads-ask-ubuntu & -2.16 & -3.44 & -2.68  \\ 
\bottomrule 
\end{tabular}\caption{\small{Ablation study. Revise each part of \proj and compare with the full model in Table~\ref{tab:problem1_performance}. ``$\alpha=0$'': Uniformly sample each step of TRWs. ``Rv. DE'': Remove DEs (Eq.~\ref{eq:single-DE}). ``Sym. DE'': Ignore asymmetry between DEs in Eq.~\eqref{eq:DE}. Aggregate three DEs via sum pooling~\cite{li2020distance}.}}
\label{tab:ablation}
\end{table}
\vspace{-8mm}

\begin{table}
\resizebox{\columnwidth}{!}{
\begin{tabular}{ c | c | c c c | c c c  } 

\toprule  
\multicolumn{2}{c|}{ ~ } & 
\multicolumn{3}{c|}{tags-math-sx} & \multicolumn{3}{c}{tags-ask-ubuntu}  \\
\cline{3-8}
\multicolumn{2}{c|}{ ~ }  & \small{Wedge} & \small{Triangle} & \small{Closure} & \small{Wedge} & \small{Triangle} & \small{Closure} \\
\midrule
\multirow{4}*{\small{NLL}}&

\small{TGAT} & 1.54 & 0.77 & 1.37 & 1.55 & 1.03 & 1.40  \\

~ & \small{JODIE} & 1.47 & 2.60 & 2.71 & 1.91 & 0.94 & 1.36 \\


~ & \small{TGN} & 1.23 & 0.69 & 1.45 & 1.50 & 0.87 & 1.37  \\

~ & \small{\textbf{\proj}} & \textbf{1.33} & \textbf{0.68} & \textbf{1.33} & \textbf{1.49} & \textbf{0.88} & \textbf{1.36} 
\\
\midrule
~&

\small{TGAT} & 0.97 & 0.43 & 0.76 & 0.99 & 0.54 & 0.89\\

\small{MAE-} & \small{JODIE} & 1.58 & 1.13 & 1.34 & 1.04 & 0.54 & 0.92 \\


\small{LOG} & \small{TGN} & 0.97 & \textbf{0.40} & 0.73 & 1.00 & 0.50 & 0.88\\

~ & \small{\textbf{\proj}} & \textbf{0.71}  & 0.41 & \textbf{0.95} & \textbf{0.98} & \textbf{0.48} & \textbf{0.83}

\\





\bottomrule 

\end{tabular}
}
\caption{\small{Averaged NLL (Eq.~\eqref{eq:NLL})  and MAE-LOG: $\log t-\hat{\log t}$ where $\log t$, $\hat{\log t}$ are the ground-truth and the estimated log time of time prediction (for Q2): Lower is better. 
}
}\label{tab:time_prediction_performance}
\vspace{-8mm}
\end{table}

\vspace{-1mm}
\subsection{Time Prediction (for Q2)}
\vspace{-1mm}

\proj can predict when a triplet of interest expands the interaction into a certain higher-order pattern for the first time. In this study, we only compare our method with NN-based baselines except NHP because heuristic  and NHP are found not expressive enough to capture elaborate temporal information.
All the baselines share the same 
decoder for Q2 as \proj defined in Sec.~\ref{sec:decoder}. The results in Table~\ref{tab:time_prediction_performance}. We use NLL (Eq.\eqref{eq:NLL}) as the evaluating metrics, which is superior in evaluating how different models learn the underlying distributions of the time. As Table~\ref{tab:time_prediction_performance} shows, \proj estimates the distributions of the time uniformly more accurately than baselines. We also compare
MAE-LOG, which provides an explicit characterization of the gap between this estimated log time and the ground truth. 
By analyzing MAE-LOG, \proj is expected to yield a time estimation $\hat{t}\in[2.29^{-1}t, 2.29t]$ where $t$ is the ground truth. In contrast, baselines yields a estimation in $[2.41^{-1}t, 2.41t]$ and \proj reduces the approximation factor of the estimation by about 5\%.

\proj does not outperform the baselines as significantly in Q2 as in Q1. The main reason is that Q2 is much easier for baselines than Q1. For Q1, baselines need to distinguish those hierarchical high-order patterns. As they either accumulate noise or have node ambiguity issues, they generally find it hard to predict the patterns accurately. For Q2, instead, we simplify the task by restricting the time prediction within one single type of high-order patterns.

\vspace{6mm}

\begin{figure}
 \begin{minipage}{0.31\textwidth}
 \centering
\resizebox{\textwidth}{!}{
\begin{tabular}{c c}

\toprule Tasks (AUC) & Closure \textit{v.s.} Triangle (65.63) \\
\midrule
\multirow{2}*{Largest $C_W$} & (\{$x$, $x$\}, 0) (\{1, $x$\}, 1) (\{$x$, $x$\}, 2) $100.00\%$ \\
~ & \textcolor{blue}{(\{0, $x$\}, $x$) (\{1, $x$\}, 1) (\{2, $x$\}, $x$)} $100.00\%$  
\\
\midrule
\multirow{2}*{Smallest $C_W$} & (\{0, $x$\}, $x$) (\{1, $x$\}, $x$) (\{1, $x$\}, $x$) $38.94\%$ 
\\
~ & (\{0, $x$\}, $x$) (\{1, $x$\}, $x$) (\{2, 2\}, $x$) $54.49\%$  
\\
\midrule

Tasks (AUC) & Closure and Triangle \textit{v.s.} Wedge (57.58) \\

\midrule

\multirow{2}*{Largest $C_W$} & (\{0, $x$\}, $x$) (\{1, $x$\}, $x$) (\{2, $x$\}, 1) $84.26\%$ \\

~ & (\{0, $x$\}, $x$) (\{1, $x$\}, $x$) (\{2, $x$\}, 2) $66.13\%$ \\

\midrule

\multirow{2}*{Smallest $C_W$} & \textcolor{blue}{(\{$x$, 0\}, $x$) (\{$x$, 1\}, $x$) (\{$x$, 1\}, $x$)} $34.68\%$ \\

~ & \textcolor[RGB]{34, 139, 34}{(\{0, $x$\}, $x$) (\{1, $x$\}, $x$) (\{2, $x$\}, $x$)} $49.74\%$ \\

\midrule

Tasks (AUC) &  Wedge \textit{v.s.} Edge (71.10)\\

\midrule

\multirow{2}*{Largest $C_W$} & (\{$x$, 0\}, $x$) (\{$x$, 1\}, $x$) (\{$x$, 2\}, 2) $49.14\%$  \\

~ & (\{$x$, 0\}, $x$) (\{$x$, 1\}, $x$) (\{$x$, 2\}, $x$) $50.26\%$ \\

\midrule

\multirow{2}*{Smallest $C_W$} & \textcolor{blue}{(\{0, $x$\}, $x$) (\{$x$, 1\}, $x$) (\{1, 2\}, $x$)} $24.60\%$ \\

~ & \textcolor[RGB]{34, 139, 34}{(\{$x$, 0\}, $x$) (\{$x$, 1\}, $x$) (\{2, 2\}, $x$)} $28.81\%$
\\
\bottomrule 

\end{tabular}
}

\end{minipage}
\begin{minipage}{0.16\textwidth}
 
 \vspace{0.1mm}
 
 \includegraphics[trim={10.6cm 8.1cm 10.6cm 7.6cm},clip, width=1.\textwidth]{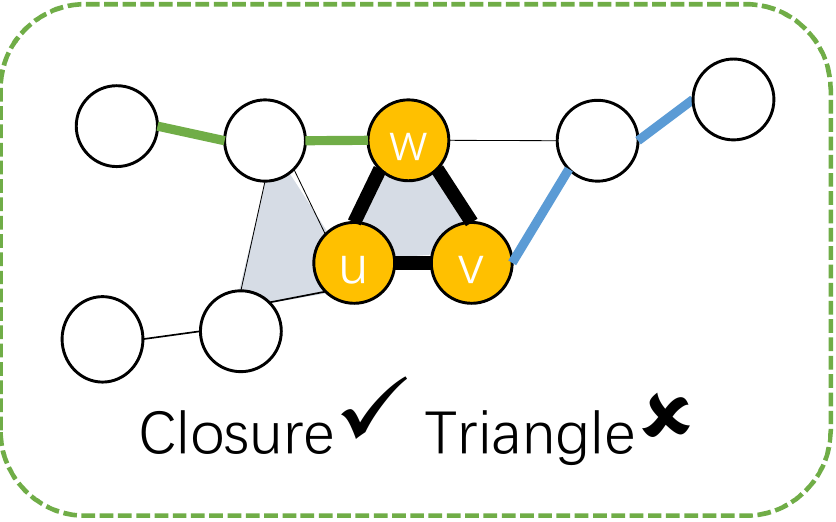}
 
 \vspace{0.5mm}
 
 \includegraphics[trim={10.3cm 8.8cm 10.9cm 6.9cm},clip, width=1\textwidth]{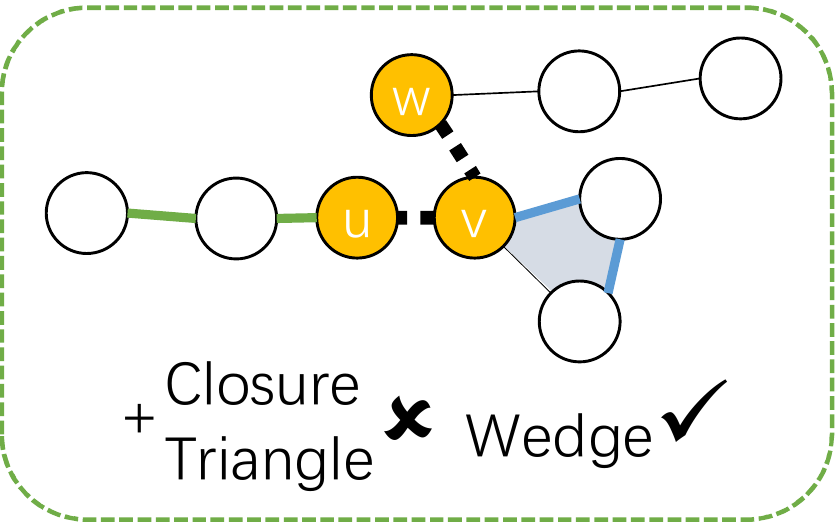}
 
 \vspace{0.5mm}
 
 \includegraphics[trim={10.55cm 8.1cm 10.65cm 7.6cm},clip, width=1.0\textwidth]{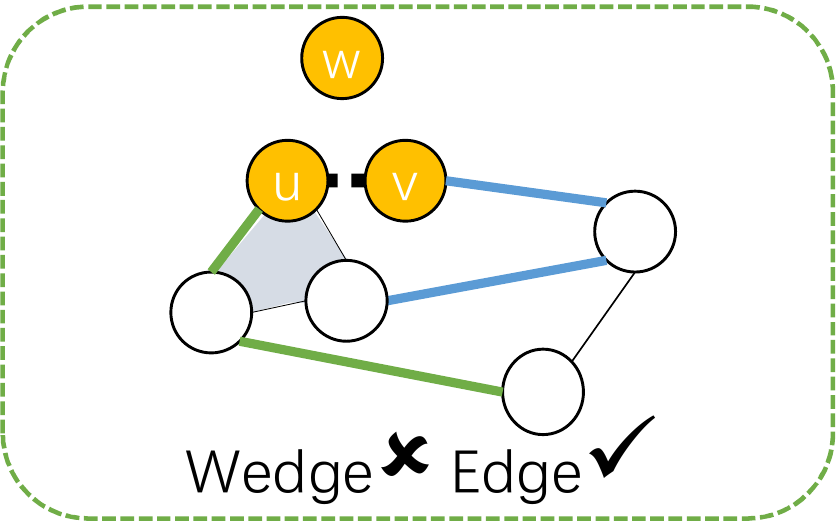}
\end{minipage}
\captionof{figure}{\small{The above table is the TRWs with the top-2 largest and smallest scores $C_W$ (Eq.~\eqref{eq:TRW-score}) to distinguish two patterns in tags-ask-ubuntu. TRWs are categorized based on DEs: Given a node triplet $(\{u,v\},w,t)$, a TRW $W$ is categorized based on $W = (I(W[0]),I(W[1]),I(W[2]))$ where $I(a)=(\{\tilde{g}(a;S_u),\tilde{g}(a;S_v)\},\tilde{g}(a;S_w))$, $\tilde{g}(a;S_z)$ uses SPD 
and $x$ indicates SPD$\geq3$. \proj uses the decoder for Q3. The ratio besides each TRW refers the number of the corresponding TRWs that appear as features for the first class over that for both classes. The figures on the right are the illustration of TRWs based on those \textcolor{blue}{blue} or \textcolor[RGB]{34, 139, 34}{green} TRWs in the left table.
}} 
\vspace{-6mm}
\label{tab:interpretation} 
\end{figure}

\vspace{-8mm}
\subsection{Most Discriminatory TRWs (for Q3)}
Searching for the important features that induce certain prediction is always a challenging problem in machine learning, especially for NN-based models. Identifying network structural features is even more challenging as they are irregular and strongly correlated by nature. 
\proj provides possibilities to identify the most discriminatory structural features in the format of TRWs. As illustrated in Sec.\ref{sec:decoder} the \emph{decoder for Q3} paragraph, \proj allows categorizing TRWs into different types and summarized the impact of each individual type as a single score on the higher-order pattern prediction. 

We evaluate \proj over tags-ask-ubuntu.
We rank different types of TRWs based on their discriminatory scores $C_W$ (Eq.\eqref{eq:TRW-score}) and list the top-2 ranked TRWs in Table~\ref{tab:interpretation}. Because the NN-based encoder encodes each TRW separately to achieve $C_W$ and the decoder is simply a linear logistic regression, \proj yields very stable model interpretations. The top-2 ranked TRWs always stay in the top-5 ranked ones in the five random experiments.

We sample $3,000$ examples for each task and compute the ratio of each type of TRWs that appear as the features for the first class and both two classes (first + second). We see a good correlation between the discriminatory score $C_W$'s and the ratios, which indicates the effectiveness of \proj in selecting indicative TRW features.

Those most discriminatory TRW features shown in Table~\ref{tab:interpretation} demonstrates interesting patterns. We further study these patterns by using three examples as illustrated in 
Fig.~\ref{tab:interpretation}
(TRWs with blue or green font in 
the table 
are drawn). 



\begin{itemize}[leftmargin=*]
\item \emph{Closures v.s. Triangles}: They start from $u,v$ (or $w$) and jump to some nodes that are directly connected to $w$ (or $u,v$). This means that $u,v,w$ tend to form a Closure pattern if both $u,w$ and $v,w$ have common neighbors before. We may interpret it as a higher-order pattern closure that generalizes triadic closure in graphs~\cite{simmel1950sociology}. Without hypergraph modeling, we cannot distinguish Closures and Triangles, and thus cannot observe this higher-order pattern closure.
\item \emph{Wedges v.s. Closures + Triangles}: They start from $u$ (or $v$, $w$) and jump to some nodes that are still far away from the other two nodes. These TRWs can be viewed as the opposite of the above higher-order pattern closure. An interesting question is whether they are the most indicative ones for Edges. See the next bullet. 
\item \emph{Edges v.s. Wedges}: Interestingly, they follow a uniform pattern by jumping around $u$ and $v$. Therefore, the most indicative features that forbid any interaction expansion are the structures that densely connect $u$ and $v$ but disconnect $w$. Meanwhile, combined with the previous bullet, the most indicative features for interaction expansion but from only one node (Wedges) are the structures that connect a single node in the node triplet.

\end{itemize}

\subsection{Parameter sensitivity}
\label{sup:para_sensitivity}

We study the sensitivity of \proj w.r.t. three hyperparameters the number of total walks $M$, the length of each walk $m$, and $\alpha$ in Alg.~\ref{alg:TRW} which indicates how likely \proj is to sample the hyperedges that appeared a long time ago. 
Results are shown in Fig.~\ref{fig:abla_study}.
Under most circumstances, the performance of $M=64$ is higher than $M=32$ and $M=16$. This indicates that a larger $M$ can ensure good performance. 
Fig.~\ref{fig:abla_study} b) shows that with the fixed $M=64$, different $m$'s yield similar results as long as the first-hop neighbors ($m\geq 1$) get sampled. For more complicated networks, we expect a larger $m$ may be needed.  
The sensitivity study of $\alpha$ is shown in Fig.~\ref{fig:abla_study} c). We choose $\alpha$ from $1e-8$ to $1e-5$. The choice of $\alpha$ should be normalized w.r.t. the average edge intensity that indicates how many edges appear within a time unit.  The average edge intensity of a network is $2|\mathcal{E}||e|^2/(|V|T)$, where $|\mathcal{E}|$ is the number of hyperedges, $|e|$ is the average size of hyperedge, $|V|$ is the number of nodes, $T$ is the entire time range. We normalize the average edge intensity to $1e-5$ per time unit by rescaling the entire time range $T$. And thus, the ratio between the average edge intensity and $\alpha$ is between $1$ and $1000$.
For threads-ask-ubuntu, the case $\alpha=1e-6$ slightly outperforms the other three cases while in general different $\alpha$'s give comparable results. For tags-ask-ubuntu, a too small $\alpha$ may introduce performance decay, which demonstrates that more recent hyperedges may provide more informative patterns.

\vspace{-3mm}
\begin{figure}[h]

\includegraphics[trim={3.4cm 9.2cm 4.0cm 9.5cm},clip,width=0.152\textwidth]{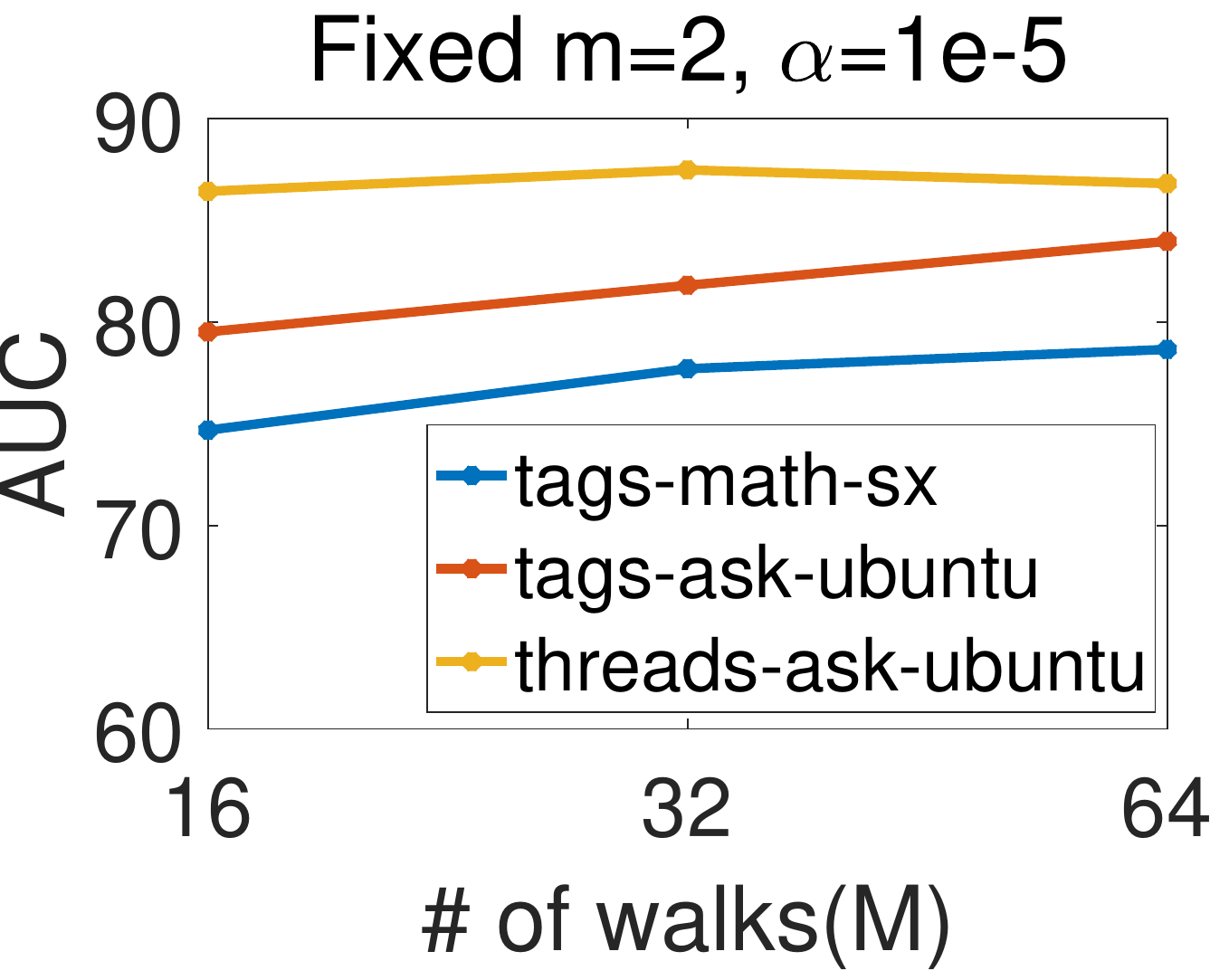}
\includegraphics[trim={3.4cm 9.2cm 4.0cm 9.5cm},clip,width=0.152\textwidth]{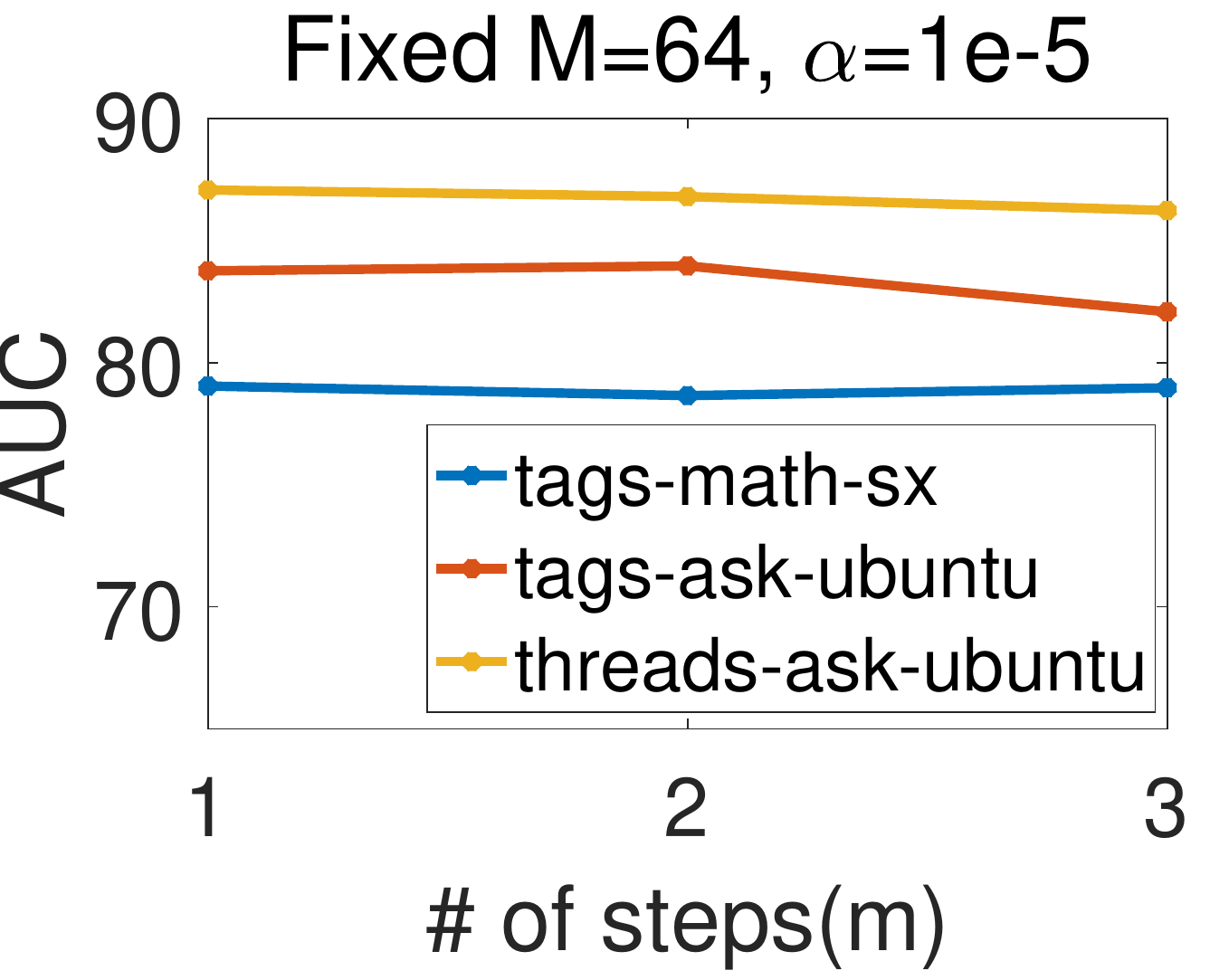}
\includegraphics[trim={3.4cm 9.2cm 3.6cm 9.5cm},clip,width=0.16\textwidth]{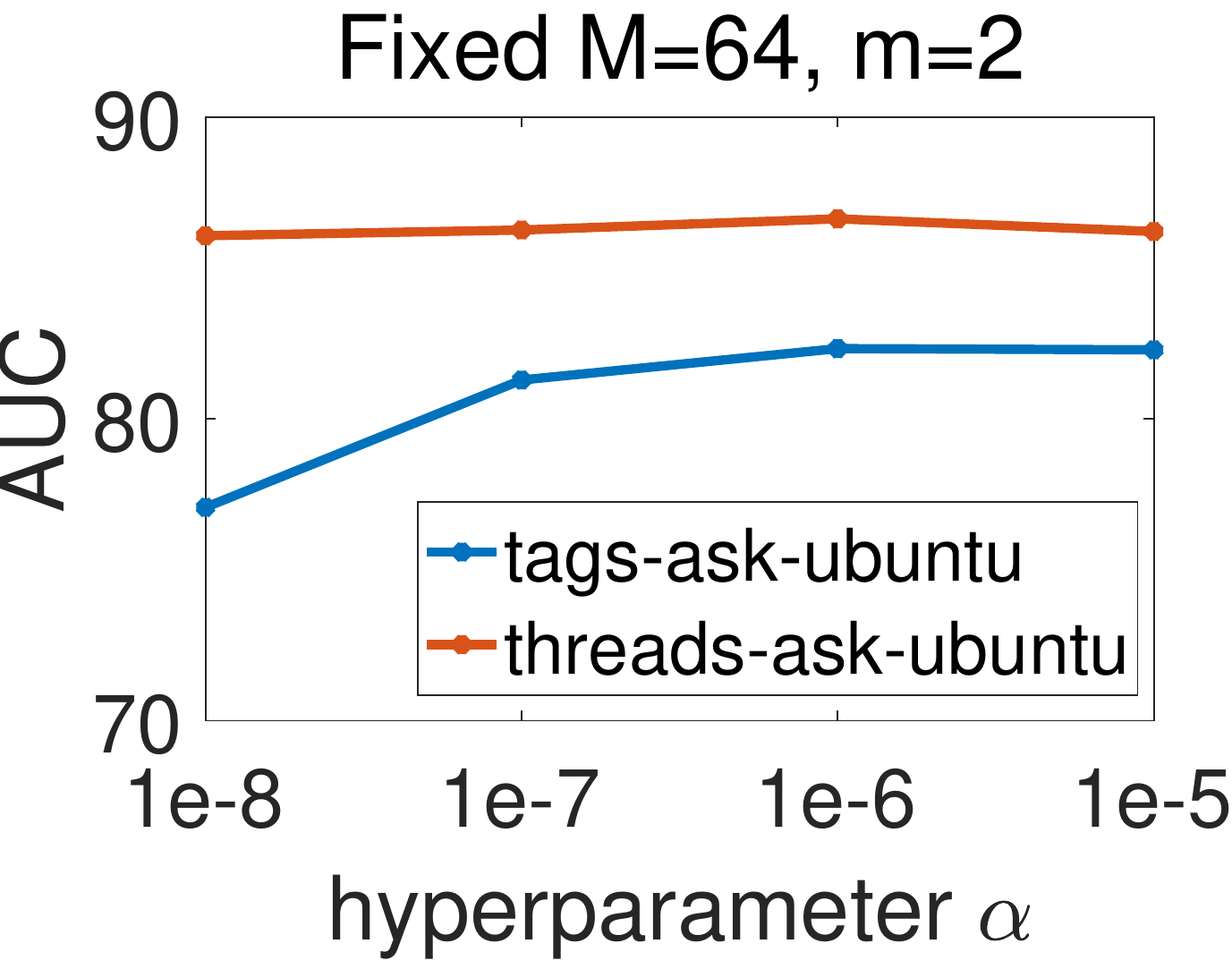}

\vspace{-3mm}
\captionof{figure}{\small{
The ablation study of $M$, $m$, and $\alpha$ for different datasets. We set every other parameter the same and only change the $M$, $m$ or $\alpha$, and report the AUC for Q1. 
}} \label{fig:abla_study}
\end{figure}
\vspace{-5mm}

\section{Conclusion and Future Works}
\label{sec:conclusion}

In this work, we proposed the first model \proj to predict higher-order patterns in temporal hypergraphs to answer what type of, when, and why interactions may expand in a node triplet. \proj can be further generalized to predict even higher-order patterns. In the future, for applications, \proj may be adapted to study higher-order pattern prediction in scientific disciplines such as biology~\cite{zhang2019hyper,alsentzer2020subgraph}. Methodologically, \proj may be further generalized to predict the entire lifecycles of higher-order patterns as shown in Fig. 5 of \cite{benson2018simplicial}.

\bibliographystyle{ACM-Reference-Format}
\bibliography{reference}

\section*{Acknowledgments}
We greatly thank the actionable suggestions given by reviewers. Y. Liu and P.L. are supported by the 2021 JPMorgan Faculty Award and the National Science Foundation (NSF) award HDR-2117997.

\appendix
\begin{center}
\Large \textbf{Appendix}
\end{center}

\vspace{-2mm}
\section{Generalization of Asymmetric DE}
\label{sec:DE-explain}

As discussed in the main text,
if two node triplets share the same historical contextual network structure, then any nodes that appear on TRWs sampled for these two node triplets hold the same DE distribution. 
Formal mathematical characterization is given as follows. First, we define what is shortest path distance between hyperedges.

\begin{definition} [Shortest path distance between hyperedges]
The shortest path distance between two hyperedges $e$ and $e'$ is the minimal number of hyperedges $e_1, e_2,...,e_k$ where $e_1=e$, $e_k=e'$ and $e_i\cap e_{i+1}\neq \emptyset$ for $i=1,2,...,k-1$.
\end{definition}

Based on this definition, we give the following definition of historical $m$-hop subgraphs.

\begin{definition} [Historical $m$-hop subgraphs]
Given a node triplet $\tau = (\{u, v\}, w, t)$, consider the set of hyperedges that appear before $t$ and hold a shortest path distance at most $m$ to either $u$, $v$ or $w$. Denote this set of hyperedges as $\mathcal{E}_{\tau,m}$. The historical $m$-hop subgraph of $\tau$, $G_{\tau,m}$ is the subgraph that consists of the nodes covered by the hyperedges in $\mathcal{E}_{\tau,m}$ and the hyperedges in $\mathcal{E}_{\tau,m}$. 
\end{definition}


Next, we define subgraph isomorphism between two historical $m$-hop subgraphs. 
\begin{definition} [Subgraph isomorphism]
Consider two triplets $\tau_1 = (\{u, v\}, w, t)$ and $\tau_2 = (\{u', v'\}, w', t')$. Denote their historical $m$-hop subgraphs as $G_{\tau_1,m}$ and $G_{\tau_2,m}$ respectively. $G_{\tau_1,m}$ and $G_{\tau_2,m}$ are called isomorphic if there exists a bijective mapping $\pi$ from the node set of $G_{\tau_1,m}$ to the node set of $G_{\tau_2,m}$ such that for every temporal hyperedge $(\tilde{e},\tilde{t})$ in $G_{\tau_1,m}$, there is a corresponding temporal hyperedge $(\tilde{e}',\tilde{t}')$ in $G_{\tau_2,m}$ where the time differences are same $t-\tilde{t} = t' - \tilde{t}'$, and hyperedges of two subgraphs hold the correspondence $\tilde{e}' =\{\pi(v)|v\in \tilde{e}\}$. Moreover, the two node triplets hold the correspondence that $\{\pi(u), \pi(v)\} = \{u',v'\}$, $\pi(w) = w'$.
\end{definition}

Finally, generalization/transferability of DE is defined. 
\begin{theorem} [DE is transferrable/generalizable]\label{thm:gen}
Consider two triplets $\tau_1 = (\{u, v\}, w, t)$ and $\tau_2 = (\{u', v'\}, w', t')$. Suppose the historical $m$-hop subgraphs $G_{\tau_1,m}$ and $G_{\tau_2,m}$  of $\tau_1$ and $\tau_2$ are isomorphic where the bijective mapping between the two node sets that preserves such isomorphism is $\pi$. 
Denote the sampled sets of $m$-step temporal random walks for these two triplets as $S_1 = S_u\cup S_v\cup S_w$ and $S_2 = S_{u'}\cup S_{v'}\cup S_{w'}$. The following two statements are true: 
\begin{enumerate}[leftmargin=*]
    \item  For any node $a$ in $G_{\tau_1,m}$ and thus $\pi(a)$ is in $G_{\tau_2,m}$, the probability that $a$ appears in one TRW of $S_1$ is the same as the probability that $a'=\pi(a)$ appears in one TRW of $S_2$.
    \item The above two nodes $a$ and $a'=\pi(a)$ that appear in TRWs have the same conditional probabilies of their DEs, \ie, \[\mathbb{P}(I(a|\{S_u, S_v, S_w\})| a \,\text{in $S_1$}) = \mathbb{P}(I(a'|\{S_{u'}, S_{v'}, S_{w'}\})| a' \,\text{in $S_2$}).\]
\end{enumerate}
\end{theorem}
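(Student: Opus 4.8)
The plan is to prove both statements by establishing that the temporal random walk (TRW) sampling procedure in Alg.~\ref{alg:TRW} depends only on local structural and temporal information that is preserved under the isomorphism $\pi$. The key observation is that the sampling probability at each step --- namely $(|e|-1)\exp(\alpha(t_l - t))$ --- is a function purely of the hyperedge size $|e|$ and the time difference $t_l - t$, both of which are invariant quantities by the definition of subgraph isomorphism (hyperedge sizes are preserved since $\tilde{e}' = \{\pi(v) \mid v \in \tilde{e}\}$, and time differences satisfy $t - \tilde{t} = t' - \tilde{t}'$). Likewise, the uniform node selection within a hyperedge is a purely combinatorial operation that commutes with the bijection $\pi$.

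First I would set up an inductive coupling between the random walk sampling on $G_{\tau_1,m}$ and on $G_{\tau_2,m}$. Starting from the three seed nodes, which correspond under $\pi$ by the hypothesis $\{\pi(u),\pi(v)\}=\{u',v'\}$ and $\pi(w)=w'$, I would argue that at each step $i$, if the walk $W$ is currently at node $z_l$ with timestamp $t_l$, then the distribution over the next (node, time) pair in $S_1$ is carried by $\pi$ exactly onto the distribution over the next pair for the image walk in $S_2$. This holds because the candidate set $\{(e,t) \mid z_l \in e,\, t < t_l\}$ maps bijectively under $\pi$ to the candidate set for $\pi(z_l)$ --- this is where I would invoke that $\pi$ preserves incidence, hyperedge contents, and time differences --- and the sampling weights are identical term by term. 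Since all hyperedges within distance $m$ of the seeds are included in the historical subgraphs and the walks are $m$-step, the walks never leave $G_{\tau_1,m}$ (resp. $G_{\tau_2,m}$), so the coupling is well-defined throughout. By induction on the step index, the pushforward of the law of $W$ under $\pi$ equals the law of the corresponding walk in $S_2$. Statement (1) then follows immediately by marginalizing: the probability that $a$ lies on some walk in $S_1$ equals the probability that $\pi(a)$ lies on the corresponding walk in $S_2$.

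For statement (2), I would recall that the DE encoding $g(a; S_z)$ defined in Eq.~\eqref{eq:single-DE} is a deterministic function of the position-counts of $a$ across the walks in $S_z$, and $I(a|S_u,S_v,S_w)$ in Eq.~\eqref{eq:DE} aggregates these counts through the symmetric-then-asymmetric mapping $F$. Because the coupling established above identifies, walk by walk and position by position, the appearances of $a$ in $S_1$ with the appearances of $\pi(a)$ in $S_2$, and because $\pi$ maps $S_u, S_v$ (as an unordered pair) to $S_{u'}, S_{v'}$ and $S_w$ to $S_{w'}$, the entire count-vectors fed into $F$ are matched. The symmetry of $F$ in its first (braced) argument is precisely what absorbs the freedom $\{\pi(u),\pi(v)\}=\{u',v'\}$, so the case $\pi(u)=v',\pi(v)=u'$ is handled without issue. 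Conditioning on the event that $a$ (resp. $\pi(a)$) appears on a walk then yields the claimed equality of conditional DE distributions.

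The main obstacle I anticipate is making the coupling fully rigorous in the presence of \emph{repeated} temporal hyperedges covering the same node set, which the paper explicitly notes can occur. Two distinct temporal hyperedges $(\tilde e, \tilde t)$ and $(\tilde e, \tilde t'')$ with the same node content but different timestamps must be matched to their correct counterparts under $\pi$, and the isomorphism definition matches each \emph{temporal} hyperedge (node set together with time difference) rather than just node sets. I would therefore be careful to phrase the candidate-set bijection at the level of temporal hyperedges, verifying that the multiset of (size, time-gap) pairs available from $z_l$ coincides with that available from $\pi(z_l)$, so that the term-by-term equality of sampling weights genuinely produces a measure-preserving bijection of the sample spaces. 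Once this bookkeeping is handled, the induction and the marginalization/conditioning arguments are routine.
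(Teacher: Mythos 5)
Your proposal is correct and follows essentially the same route as the paper's proof: establish that the isomorphism $\pi$ carries the law of each sampled TRW on $G_{\tau_1,m}$ onto the law of the corresponding walk on $G_{\tau_2,m}$ (since the sampling weights depend only on hyperedge size and time gaps, both preserved by $\pi$), deduce statement (1) from the independence of the walks, and deduce statement (2) from the fact that $I(\cdot)$ is a deterministic, identity-free function of the matched sets of walks. Your step-wise inductive coupling and your attention to repeated temporal hyperedges simply make explicit what the paper's proof asserts in one line, so no further changes are needed.
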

\begin{proof}
First, node identities are ignored when calculating the distance encoding $I(\cdot)$ and sampling TRW (Algorithm~\ref{alg:TRW}). Only the relative position of a node in $G_{\tau,m}$ is considered.
Since $G_{\tau_1,m}$ and $G_{\tau_2,m}$ are isomorphic, the probability to sample a TRW 
$W$ from $G_{\tau_1,m}$, where $W=((z_0, t_0), (z_1, t_1), ...,(z_m, t_m))$ is equal to the probability that sampling the corresponding TRW from $G_{\tau_2,m}$ under the mapping $\pi$, \ie,  $\pi(W)=((\pi(z_0), t'-t+t_0), (\pi(z_1), t'-t+t_1), ...,(\pi(z_m),t'-t+ t_m))$. 
Furthermore, because each TRW is sampled independently, for any node $a$ in $G_{\tau_1,m}$, the probability that node $a$ appears in the $S_1$ must equal the probability that $\pi(a)$ appears in the $S_2$. So we have proved the statement 1.

The above analysis also indicates that the probability that we sample three sets of TRWs $(\{S_u, S_v\}, S_w)$ from $G_{\tau_1,m}$  is also equal to the probability to sample three sets of TRWs $(\{S_{u'}, S_{v'}\}, S_{w'})$ from $G_{\tau_2,m}$ as long as $\{S_{u'}, S_{v'}\}=\{\pi(S_{u}), \pi(S_{v})\}$ and $ S_{w'} = \pi(S_{w})$. Here $\pi(S)$ denotes the set $\{\pi(W)|W\in S\}$.

Furthermore, given $\{S_{u'}, S_{v'}\}=\{\pi(S_{u}), \pi(S_{v})\}$ and $ S_{w'} = \pi(S_{w})$, we have $I(a|\{S_u, S_v, S_w\}) = I(a'|\{S_{u'}, S_{v'}, S_{w'}\})$ if $a' = \pi(a)$ because of the definition of DE (Eq.\eqref{eq:DE}). 

Combining the above two result, we have statement ii).
\end{proof}
\vspace{-2mm}
Theorem~\ref{thm:gen} directly indicates that as long as two triplets show the same historical $m$-hop subgraphs, the distributions of the DEs computed based on sampled TRWs must be equal in the sense of probability. This guarantees that DEs will has good generalization capability and transferability. 

\vspace{-2mm}
\section{Experiments setting and baselines}
\label{sup:experiments_baselines}

\subsection{Environment}
\label{sup:experiment_env}
We ran all of our experiments on a server with CPU Intel(R) Xeon(R) Gold 6248R CPU @ 3.00GHz. The GPUs, one is Quadro RTX 6000. The code can be found \href{https://github.com/Graph-COM/Neural_Higher-order_Pattern_Prediction}{\textbf{https://github.com/Graph-COM/Neural\_ Higher-order\_Pattern\_Prediction}}. 

\vspace{-2mm}
\subsection{Dataset description} \label{sec:data}
We use five datasets for evaluation. They are Tags-math-sx, Tags-ask-ubuntu, Congress-bills, DAWN, and Threads-ask-ubuntu. All datasets can be found from \href{https://www.cs.cornell.edu/~arb/data/}{\emph{https://www.cs.cornell.edu/$\sim$arb/data/}}.

\begin{itemize}[leftmargin=*]
    \item \textbf{Tags-math-sx} is a collection of Mathematics Stack Exchange. Users can post questions on the website and attach up to $5$ relative mathematics areas as tags to each question. We denote the areas as nodes and each question as a hyperedge. The timestamps were recorded when questions were posted. 
    \item \textbf{Tags-ask-ubuntu} is similar to Tags-math-sx but it is collected from the website Ask Ubuntu. The nodes are the areas related to computer science and the questions posted by the users are the hyperedges connecting different computer science areas.  
     \item \textbf{Congress-bills} is a dataset of the bills from sponsors and co-sponsors. These bills are put forth in Congress. We denote the sponsors and co-sponsors as the nodes and the bills as hyperedges. The timestamps are recorded when the bills are introduced. 
    \item \textbf{DAWN} records the drug information taken by the patients. Each patient reports the drugs they used before when they visit the emergency department. Here, the drugs are nodes and patients are hyperedges.  The timestamps are recorded when patients visit the emergency  department. 
    \item \textbf{Threads-ask-ubuntu} shows the relationship between the users and threads from the website Ask Ubuntu. 
    The nodes are the users. The users participating in a thread form a hyperedge. 
    The timestamps are recorded when the thread is posted.
    
\end{itemize}


\subsection{Dataset proprocessing} \label{sec:proc}
 There are two different settings in the $5$ datasets. For tags-math-sx, tags-ask-ubuntu, and threads-ask-ubuntu the timestamps are recorded at millisecond resolution. 
For congress-bills and DAWN, the timestamps are recorded at day  and quarter resolution respectively. 
For the timestamps in all datasets, we set the initial times of them to $0$ by subtracting the minimum timestamp in that dataset. For tags-ask-ubuntu, tags-math-sx, and threads-ask-ubuntu, we normalize the entire time ranges of all datasets to the same value. For the time window, we always set the time window as $T_w = 0.1T$.

\vspace{-2mm}
\subsection{Baselines and the experiment setup} \label{sec:base}

First, no previous baselines can be directly applied to temporal hypergraphs, let alone to predict the full spectrum of higher-order patterns. We properly revise their setting up to make them fit temporal hypergraphs and the settings of higher-order pattern prediction.

\textbf{Heuristic Methods}: Given a triplet of interest $(\{u,v\},w,t)$, we project all the hyperedges before $t$ into a static hypergraph. Then we transform the hypergraph into a traditional graph and calculate the corresponding features, including $3$-way Adamic Adar Index (\textbf{$3$-AA}), $3$-way Jaccard Coefficient (\textbf{$3$-JC}), $3$-way Preferential Attachment(\textbf{$3$-PA}), and the arithmetic mean of traditional \textbf{AA}, \textbf{JC} and \textbf{PA}.
We have not adopted the PageRank scores generalized for higher-order pattern prediction proposed by~\cite{nassar2020neighborhood} because computing those scores in temporal hypergraphs may take much time since the hypergraph structures change over time. 

Given a node triplet $(u,v,w)$, the $3-$way JC, AA, and PA, can be calculated as follows: $3-$AA: $f_{uvw} = \sum_{i \in N(u) \cap N(v) \cap N(w)} \frac{1}{\log |N(i)|}$;
$3-$JC: $f_{uvw} = \frac{|N(u) \cap N(v) \cap N(w)|}{|N(u) \cup N(v) \cup N(w)|}$;
$3-$PA: $f_{uvw} = |N(v)| \cdot |N(v)| \cdot |N(w)|$,
where $f_{uvw}$ is the corresponding feature, $N(i)$ means the set of neighbours of node $i$.

For the arithmetic mean of AA, PA, and JC, we calculate the three pairwise features in the traditional way, denote as $f_{uv}$, $f_{uw}$, and $f_{vw}$. Then we compute the arithmetic means of these features to make predictions. Indeed, the geometric and harmonic means have a similar results, so we only report the arithmetic means here. 

We impose a two-layer nonlinear network to first expand the dimension to $10$ and then do the prediction.

\textbf{NHP}~\cite{yadati2020nhp} with code is provided \hyperlink{https://drive.google.com/file/d/1pgSPvv6Y23X5cPiShCpF4bU8eqz_34YE/view?usp=sharing}{\textbf{here}}. NHP is designed for hyperedge prediction in static hypergraphs. 
NHP can also predict the unseen hyperedges when testing. NHP generalizes the GCN method to hypergraphs and proposes a novel negative sampling method for hyperedges. Since it cannot handle temporal information, we manually separate the temporal hypergraph into snapshots. We set the $10\%$ of the total time as a snapshot, so totally $10$ snapshots. Then we train each snapshot to get the node embeddings for the corresponding time period. 
We search the GCN hidden size in \{256, 512\} and the number of input node features in \{16, 32, 64\}. 

For \textbf{JODIE}, \textbf{TGAT}, and \textbf{TGN}, we first transform the hypergraph to a traditional graph by expanding all the hyperedges to a complete graph. The timestamp remains the same. Thus, the temporal hypergraph becomes a traditional temporal graph. We split the data according to the setting in Sec.~\ref{sec:exp}. We train the model to get the best validation score, and then output all the node embeddings and apply our decoders for different tasks.

\textbf{JODIE}~\cite{kumar2019predicting} with code is provided
\hyperlink{https://github.com/srijankr/jodie}{\textbf{here}}. JODIE is proposed to learn the embeddings of users and items in a temporal graph. Here, we adapt it to the temporal hypergraph problem. We train JODIE on the transferred graph. For each node, JODIE aims to minimize the distances between the predicted embedding and the embeddings of the connected nodes. JODIE processes edges in the temporal order and update the current node embeddings with temporal information. 
The model is trained for 50 epochs. For \textbf{tags-math-sx}, and \textbf{tags-ask-ubuntu}, we search the dynamic embedding dimension in \{64, 128\} and the best performance is reported. For \textbf{DAWN} and \textbf{congress-bills}, the dimension is in \{32, 64\}. For \textbf{threads-ask-ubuntu}, the dimension is in \{50,100\}. 

\textbf{TGAT}~\cite{xu2020inductive} with code is provided \hyperlink{https://openreview.net/forum?id=rJeW1yHYwH}{\textbf{here}}. It is designed  for link prediction tasks and node embedding tasks in the temporal network. TGAT generalizes the GAT mechanism to the temporal network by selecting both the topological neighbour and temporal neighbour in history. 
We train TGAT on the transferred graph to get node embeddings. We tune the parameter to get reasonable embeddings for all the nodes. We 
set the number of attention heads to 2, set the number of graph attention layers to 2, and use 100 as their default hidden dimension. We search the degree of their neighbor sampling in \{10, 20, 30\} and choose the best model. 

\textbf{TGN}~\cite{tgn_icml_grl2020} with code is provided \hyperlink{https://github.com/twitter-research/tgn}{\textbf{here}}. TGN is designed for temporal network problems. It introduces memory modules and combines them with graph attention or other embedding methods 
to substantially increase efficiency and performance. 
The memory module stores a vector for each node that appears in the model at time $t$. 
Each interaction among the nodes is a message. The memory module is updated according to the messages. 
We search the degree of neighbours to sample in \{10, 20, 30\} and the dimension of message in \{50, 100\} and choose the setting with the best AUC. 

\textbf{\proj} For our model, we search the TRW hyperparameter $\alpha$ in \{$1e-6$, $1e-5$, $1e-4$\}, the number of TRWs $M$ in \{64, 128\} and the step length $m$ in \{2,3\} and report the best result. Further experiments about parameter sensitivity are conducted and showed in Sec.~\ref{sup:para_sensitivity}. The experiments show that our model is robust when applying different parameters. 
For the encoder, we set the distance encoding feature dimension to $108$. $F_2$ is a one-layer nonlinear neural network, with output dimension $108$. $F_1$ is a two-layer nonlinear neural network and both hidden dimension and output dimension are $108$. For the time encoder in $F_3$, we set the dimension to $172$, \ie, $d = 172$. For the decoders, $F_4$ is a two-layer nonlinear neural network, with a hidden dimension $172$. 
Note that the dimension we choose is not crucial and our model is robust enough to a wide range of dimensions. 
$F_{w_p}$, $F_{\mu_p}$, and $F_{\sigma^2_p}$ are one-layer nonlinear neural networks. 

For all the models, we use Adam to optimize the neural network with learning rate $1e-4$. 
Since the data instances in different classes are extremely unbalanced (as illustrated in Table \ref{fig:data-stat} a)), we uniformly sample the same number of data instances from each class for training/evaluation/testing. The number of sampling instances is governed by the minimum size of the four classes. The sampled balanced dataset is shared across different models. All prediction performance is averaged across five experiments with the same random sampling process. 

\end{document}